\newtheorem{lemma}{\emph{\underline{Lemma}}}
\newtheorem{remark}{\bf \emph{\underline{Remark}}}
\def\phi{\varphi}
\def\l{\left}
\def\r{\right}
\def\({\left(}
\def\){\right)}
\def\b0{{\mathbf{0}}}
\newcommand{\nn}{\nonumber}
\begin{document}
\captionsetup[figure]{name={Fig.}}
\title{\huge  
Hybrid Offline-Online Design for UAV-Enabled Data Harvesting in Probabilistic LoS Channel}
\author{Changsheng You, \IEEEmembership{Member,~IEEE} and Rui Zhang, \emph{Fellow, IEEE}   \thanks{\noindent Part of this work has been presented in IEEE Global Communications Conference, Waikoloa, HI, USA, 2019 \cite{you2019globecom}. \newline \indent C. You and R. Zhang are  with the Dept. of Electrical and Computer Engineering, National University of Singapore, Singapore (Email: eleyouc@nus.edu.sg, elezhang@nus.edu.sg).
 }}

\maketitle
\vspace{-15pt}
\begin{abstract}

This paper considers an unmanned aerial vehicle (UAV)-enabled wireless sensor network (WSN) in urban areas, where a {\color{black}rotary-wing} UAV is deployed  to collect data from distributed sensor nodes (SNs) within a given duration. To characterize the occasional building blockage between the UAV and SNs, we construct the  \emph{probabilistic} line-of-sight (LoS) channel model for a Manhattan-type city by using the combined simulation and data regression method, which is shown in the form of a \emph{generalized logistic} function of the UAV-SN elevation angle. We assume that only the knowledge of SNs' locations and the probabilistic LoS channel model is known \emph{a priori}, while the UAV can obtain the \emph{instantaneous} LoS/Non-LoS channel state information (CSI) with the SNs in \emph{real time} along its flight. Our objective is to maximize the minimum (average) data collection rate from all the SNs for the UAV. To this end, 
we formulate a new  rate maximization problem by jointly optimizing the UAV \emph{three-dimensional} (3D) trajectory and transmission scheduling of SNs. Although the optimal solution is intractable due to the lack of the complete UAV-SNs CSI,  we  propose in this paper a \emph{novel} and \emph{general} design method, called \emph{hybrid offline-online} optimization,  to obtain a suboptimal solution to it, by leveraging both the statistical and real-time CSI. Essentially, our proposed method  \emph{decouples} the  joint design of UAV trajectory and communication scheduling into two phases: namely, an \emph{offline} phase that determines the UAV path prior to its flight based on the probabilistic LoS channel model, followed by an \emph{online} phase that adaptively adjusts the UAV flying speeds along the offline optimized path as well as communication scheduling based on the instantaneous UAV-SNs CSI and SNs' individual amounts of data  received accumulatively. Extensive simulation results are provided to show the significant rate performance improvement of our proposed design as compared to various benchmark schemes.
\end{abstract}
\begin{IEEEkeywords}
UAV communications, wireless sensor network, 3D trajectory optimization, probabilistic LoS channel, hybrid offline-online design.
\end{IEEEkeywords}
\section{Introduction}

The vision  of Internet-of-Drones (IoD) has spurred intensive enthusiasm in recent years on deploying unmanned aerial vehicles (UAVs)  (or Drones) to automate a proliferation of applications, such as
aerial inspection, photography, packet delivery, remote sensing, and so on  \cite{zeng2019accessing,zeng2016wireless}.  Particularly for  wireless communications, the unique features of UAVs such as high mobility, controllably maneuver as well as LoS-dominant air-ground channels have incentivized   both academia and industry to integrate them into  the conventional terrestrial  wireless networks for enhancing their  coverage and throughput, leading to various new applications, such as UAV-assisted terrestrial communications \cite{zeng2016wireless,wu2018joint,lyu2017placement,mozaffari2016efficient,mozaffari2016unmanned,bor2016efficient}, cellular-connected UAVs \cite{zhang2018cellular,zeng2019cellular,lyu2019network}, UAV-enabled mobile  relaying \cite{zeng2016throughput,chen2018local},  UAV-enabled wireless sensor networks (WSNs) \cite{you20193d,zhan2018energy,ebrahimi2018uav,gong2018flight,liu2018age,abd2018average,abd2019deep}, to name a few.
{\color{black}Specifically, for the UAV-enabled WSNs that utilize UAVs as mobile data collectors to directly receive data from spatially-separated SNs, one key problem is to design the UAV trajectory in the \emph{three-dimensional} (3D) space for maximizing data harvesting throughput for delay-tolerant applications \cite{you20193d} or minimizing data collection time or age-of-information (that characterizes the freshness of information) for delay-sensitive applications \cite{gong2018flight,liu2018age,abd2018average,abd2019deep}.} Although there are prior works that addressed  this problem (see e.g., \cite{zhan2018energy,gong2018flight,ebrahimi2018uav,liu2018age,abd2018average,abd2019deep}), they mostly  adopted the \emph{deterministic line-of-sight (LoS)-dominant} channel model which is usually a valid assumption for rural areas without high and dense obstacles. As a result, these works usually considered the design of  two-dimensional (2D) UAV trajectories only  with fixed (minimum) UAV flying altitude in an \emph{offline} manner. 

Such a design approach, however, has two main limitations. First,  in  urban areas with typically high and dense buildings/obstacles, the simplified LoS-dominant  channel model  can be practically inaccurate, as they do not capture the critical effects of  UAV \emph{location-dependent} multi-path fading and shadowing. To address this issue, two more sophisticated UAV-ground  channel models have been proposed in the literature  to improve accuracy, namely, the \emph{(elevation) angle-dependent Rician fading} and the \emph{probabilistic LoS} channel models. Specifically, when the UAV flies sufficiently high above the ground, the shadowing effect diminishes and the main channel randomness comes from multi-path reflection, scattering, and diffraction by the ground obstacles. Such characteristics can be captured by the angle-dependent Rician fading channel  model \cite{azari2018ultra}, where the Rician factor generally increases with the elevation angle between the UAV and its served ground node. Based on this model, 3D UAV trajectory has been designed  in \cite{you20193d} for maximizing the data harvesting throughput in UAV-enabled WSNs. In contrast, if the UAV flies at a relatively low altitude, the shadowing effect becomes more significant, due to which the signal propagation between the UAV and a ground node can be occasionally blocked by buildings, where the likelihood of blockage in general depends on the relative position between the UAV and ground node, as well as  the distributions of building density and  height. Roughly speaking, 
the UAV-ground  channel can be divided into two states, namely, LoS versus non-LoS (NLoS), each characterized by a different model. To avoid the excessive measurements for obtaining the complete information of LoS/NLoS channels at each location in a large geographical area, the probabilistic LoS channel model has been proposed  in \cite{al2014optimal} to characterize the channel state statistically by modeling the occurrence probabilities of the LoS/NLoS states via  heuristic  functions of the UAV-ground elevation angle. Intuitively, the LoS probability in this model increases with the UAV-ground elevation angle, by either moving the UAV horizontally closer to the ground node or increasing its altitude above the ground; while in the latter case, the channel path loss also increases with distance, thus yielding an interesting  \emph{angle-distance trade-off} in the UAV-ground channel gain versus its altitude, as will be further investigated  in this paper.

The second limitation is that, the adopted offline design approach for the UAV trajectory and communication scheduling based on the deterministic LoS-dominant channel model may suffer considerable (rate) performance loss in the urban areas with random building blockage, since the offline designed policy \emph{cannot} adapt to the \emph{real-time} location-dependent  UAV-ground channel states, which is rather critical due to the significant disparity between the channel strengths under the LoS and NLoS states. Although some recent works have adopted the probabilistic LoS channel model for designing UAV trajectory (e.g., \cite{esrafilian2018learning,zeng2019energy}),  they still followed the offline design approach by considering the \emph{deterministic expected} channel gain for the random (uncertain) channel state and hence did not involve channel-aware online  adaptation. To tackle this issue, an initial attempt has been made in \cite{chen2018local} where the authors proposed a nested segmented UAV-ground channel model and developed a customized online algorithm to search the optimal UAV position for UAV relaying by leveraging the information of local channel state and terrain topology. Recently, a new reinforcement learning-based UAV path design was developed in \cite{zeng2019path} that progressively determined UAV trajectory according to real-time channel measurements. This approach, however, is \emph{data expensive} in the sense that it entails abundant real UAV flight data   in the offline learning phase. In addition, it is worth mentioning that there has been a number of recent works that used deep learning to online design the  UAV trajectory, but they primarily targeted  to learn optimization solutions  \cite{liu2018energy} or adapt to other environmental randomness (instead of the uncertain channel state) such as intermittent interference \cite{challita2019interference} and random user movement \cite{liu2018trajectory}. Besides channel-state awareness, another key issue for the online UAV trajectory design is affordable computational complexity  for practical implementation. This issue has been widely investigated in the conventional UAV trajectory design for obstacle avoidance.  For example,  a receding-horizon-control based deterministic path planning was studied in \cite{kuwata2003real} that progressively plans the overall trajectory by finding the local trajectory in a finite  forward time horizon. In addition, randomized path planning has been applied  by using heuristic functions (called potential fields) to guide the path search or leveraging a roadmap that contains pre-computed feasible paths for the path selection  \cite{barraquand1997random}. Nevertheless, these approaches are generally heuristic and cannot be directly applied to the new  \emph{communication-aware} UAV trajectory design due to the more complicated coupling  between the communication performance  and 3D UAV trajectory.

Motived by the above, this  paper aims to overcome the aforementioned limitations in the existing designs for  communication-aware UAV trajectory optimization. For the purpose of exposition, we consider a UAV-enabled WSN where one single {\color{black}rotary-wing} UAV flies over multiple sensor nodes (SNs)  to collect data from them within a given duration. The SNs are normally in the silent mode for energy saving and transmit data only when being waken up by the UAV (e.g., via a beacon signal broadcast by the UAV). Assume that the UAV only has the knowledge of SNs' locations and the probabilistic LoS channel model prior to its flight, while  it can obtain the instantaneous UAV-SNs channel state information (CSI) along its flight. Our objective is to maximize the minimum (average) data collection rate from all the SNs for the UAV by jointly designing its 3D trajectory and transmission scheduling of SNs. The main contributions of this paper are summarized as follows.

\begin{itemize}
\item Firstly, we propose a \emph{novel} and \emph{general} method to design the 3D UAV trajectory and communication scheduling adaptive to the random building blockage in urban areas. To this end, we start with improving the accuracy of the conventional probabilistic LoS  channel model by applying the combined simulation and data regression method. The newly constructed model for a Manhattan-type city is shown to be a \emph{generalized logistic} function of the UAV-SN elevation angle. Based on this model,  we then formulate an optimization problem to maximize the minimum (average) data collection rate from all the SNs for the UAV, whose optimal solution, however, is difficult to obtain due to the lack of the complete UAV-SNs CSI at all possible UAV locations. {\color{black}To tackle this difficulty, we propose to derive its suboptimal solution based on a new \emph{hybrid offline-online optimization} method, by leveraging both the statistical and real-time CSI, which greatly differs from the conventionally adopted offline design approach in e.g., \cite{you20193d,esrafilian2018learning}.}  The main idea of our proposed method is to decouple the joint design of UAV trajectory and communication scheduling into two phases: namely, an offline phase that optimizes the UAV path (i.e., specifying the UAV flying direction via a sequence of ordered waypoints along the trajectory) prior to its flight based on the probabilistic LoS channel model, followed by an online phase that adaptively adjusts the UAV flying speeds along the offline optimized path as well as its communication scheduling with SNs based on the instantaneous UAV-SNs CSI and SNs' individual amounts of data  received accumulatively.
\item Secondly, we propose efficient algorithms for solving the formulated optimization problems in  both the offline and online phases. Specifically, in the offline phase, we aim to maximize the minimum \emph{expected} (average) rate from all the SNs based on the probabilistic LoS channel model. To solve this non-convex problem, we approximate the expected rate function, which is highly complicated with respect to (w.r.t.) the 3D UAV trajectory, by a tractable lower bound based on the dominant rate in the LoS channel state. Since the reformulated problem is still non-convex and thus is difficult to solve, we further apply continuous relaxation to the integer communication scheduling constraints in the problem and then solve the relaxed problem sub-optimally by using the techniques of block coordinate descent (BCD) and successive convex approximation (SCA).\footnote{\color{black}{The detailed offline design in this paper is different from that in \cite{you20193d}, since we consider a new UAV-ground channel model and derive a new expression for the expected-rate function.}} On the other hand, for the online phase, we formulate a linear programming (LP) to maximize the \emph{updated} minimum expected rate from all the SNs at each waypoint along the offline optimized path. The LP can be efficiently solved with low complexity at the UAV in real time, thus making the online adaptation amenable  to practical implementation. Extensive simulation results are provided to verify the effectiveness of the proposed hybrid design.
\item {\color{black}Thirdly, we draw several interesting observations and  insights into the proposed hybrid design. As compared to the conventional 2D trajectory design based on the simplified LoS channel model, our proposed 3D UAV trajectory can further exploit the additional degrees-of-freedom (DoF) of  the UAV vertical trajectory to balance the said angle-distance trade-off for rate enhancement.
Moreover, the proposed low-complexity online adaptation can effectively leverage the real-time CSI to improve the minimum-rate performance, by dynamically  scheduling the SNs with favorable channels for data transmission as well as adjusting the UAV flying speeds.}
\end{itemize}

The remainder of this paper is organized as follows. Section~\ref{Sec:Model} introduces the system model, based on which we formulate an optimization problem in Section~\ref{Sec:Problem} and present the main idea of the proposed hybrid offline-online optimization method for solving it. Then the offline and online phases are designed in Sections~\ref{Sec:Off} and \ref{Sec:On}, respectively. Last, extensive simulation results and discussions are provided in Section~\ref{Sec:Simu}, followed by the conclusions given in Section~\ref{Sec:Conc}.

\vspace{-5pt}
\section{System Model}\label{Sec:Model}

{\color{black}Consider a UAV-enabled WSN where a rotary-wing UAV is dispatched to collect data from $K$ ground SNs, denoted by the set $\mathcal{K}=\{1,\cdots, K\}$, within a given  duration of $T_0$.}
 The SNs' locations are represented by $({\mathbf{w}}_k^{T}, 0)$, $\forall k\in \mathcal{K}$, where ${\mathbf{w}}_k=[x_k, y_k]^{T}\in\mathbb{R}^{2\times1}$ denotes the horizontal coordinate of SN $k$. In the following subsections, the models of UAV trajectory, UAV-SN channel, and data collection from SNs  are described, respectively.

\vspace{-5pt}
\subsection{UAV Trajectory Model}\label{d}
{\color{black}For ease of analysis, the time horizon $T_0$ is partitioned (discretized)  into $N$ equal time slots with sufficiently small slot length $\delta=T_0/N$ such that the UAV's location can be assumed to be approximately unchanged relative to the ground SNs  within each time slot.  Moreover, we assume that the UAV's initial and final locations are predetermined  when e.g., the UAV is set to be launched and landed at certain preferred locations or the data-collection mission specifies the initial and final locations.\footnote{{\color{black}For the UAV with unconstrained initial and/or final locations as well as finite flight duration, the optimal UAV's initial/finial location for maximizing the achievable rate of the system in general can only be obtained by an exhaustive search over the 3D space of interest.}}}
As such, the UAV trajectory can be approximated by an $(N+1)$-length 3D  sequence $\{({\mathbf{q}}_n^T, z_n)\}_{n=1}^{N+1}$ with $({\mathbf q}_1^T, z_1)=({\mathbf{q}}_I^{T}, z_I)$ and $({\mathbf q}_{N+1}^T, z_{N+1})=({\mathbf{q}}_F^{T}, z_F)$ denoting the UAV's initial and final locations, respectively. {\color{black}Assuming  that the UAV can independently control its horizontal and vertical flying speeds subject to their maximum values, denoted  by $V_{\rm{xy},\max}$ and $V_{\rm{z},\max}$  respectively in meter/second (m/s)\footnote{\color{black}\url{https://www.dji.com/sg/phantom-4-rtk/info}}}, then  the maximum horizontal and vertical flying distances within each time slot are given by $S_{\rm{xy},\max}=V_{\rm{xy},\max}\delta$ and $S_{\rm{z},\max}=V_{\rm{z},\max}\delta$, leading to the following trajectory constraints\footnote{{\color{black}For the UAV with a 3D maximum compound flying speed $V_{\max}$, the assumption of  independently-controlled UAV horizontal and vertical flying speeds  still applies as long as $V_{\rm{xy},\max}^2+V_{\rm{z},\max}\le V_{\max}^2$.}}
\vspace{-5pt}
\begin{align}
||{\mathbf q}_{n+1}-{\mathbf q}_n||\le S_{\rm{xy},\max},~~ |z_{n+1}-z_n|\le S_{\rm{z},\max}, ~~\forall n\in\mathcal{N},
\end{align}
where $\mathcal{N}=\{1, \cdots, N\}$.  To avoid obstacles such as buildings and conform to aerial regulations, the UAV is required to fly at an altitude within a given range, yielding the following constraints
\begin{equation}\label{Eq:zmin}
H_{\min}\le z_n\le H_{\max}, \quad\forall n \in\mathcal{N}.
\end{equation}
{\color{black}In practice, to guarantee a certain accuracy on the time-discretization approximation, the number of time slots $N$ can be chosen to satisfy that $\frac{\max\{S_{\rm{xy},\max},S_{\rm{z},\max}\}}{H_{\min}}\le\varepsilon_{\max}$, where $\varepsilon_{\max}$ is a given threshold. Thus, the number of time slots $N$ can be set as $N=\left\lceil\frac{T_0\max\{V_{{\rm xy},\max},V_{{\rm z},\max}\}}{H_{\min}\varepsilon_{\max}}\right\rceil$, where $\lceil\cdot\rceil$ denotes the ceiling operation, since further increasing $N$ will increase the design complexity for the 3D UAV trajectory.}

\begin{figure}[t!]
\centering
\subfigure[A Manhattan-type city.]{\label{FigCity}
\includegraphics[height=5.3cm]{./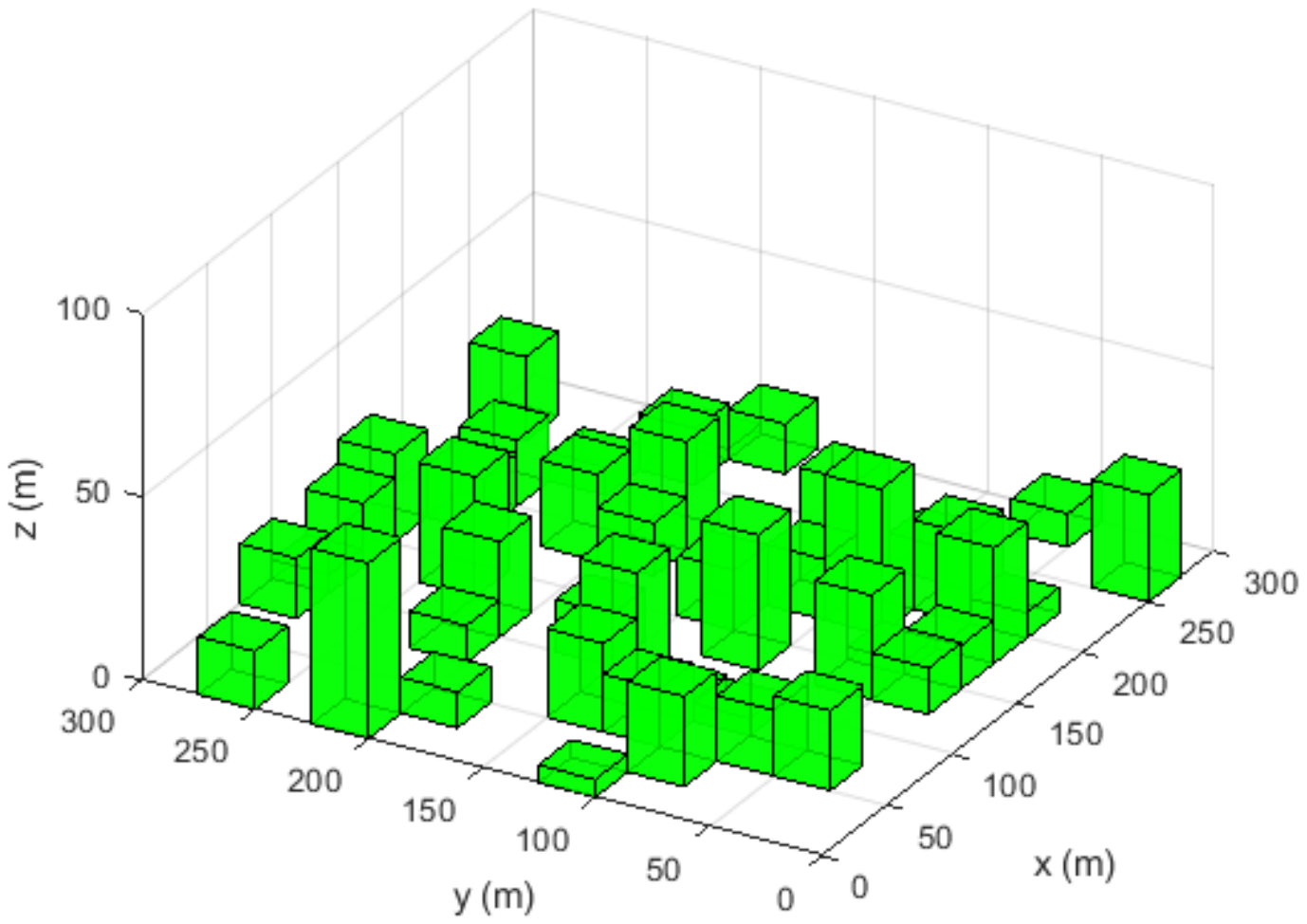}}
\hspace{6mm}
\subfigure[Generalized logistic regression for approximating \newline\indent \quad~ LoS probability.]{\label{Fig:LoSProb}
\includegraphics[height=5.5cm]{./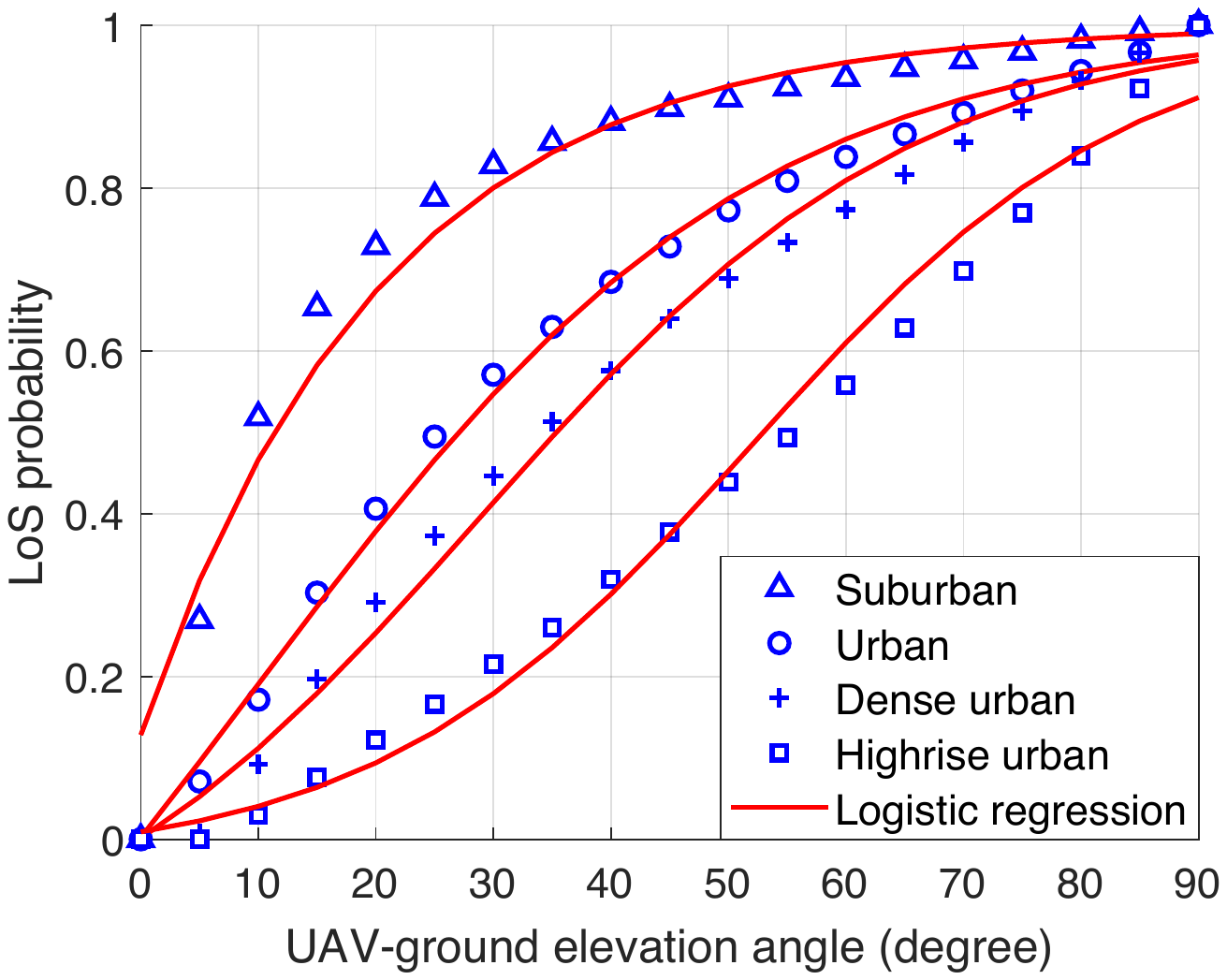}}
\centering
\caption{Generalized logistic model for the LoS probability in a Manhattan-type city.}
\label{Fig:TrajTime}
\end{figure} 


\subsection{UAV-SN Channel Model}\label{Sec:ChanMod}
Assuming that the complete CSI with SNs for the UAV  in a given  3D space, is not known \emph{a priori}, we characterize the statistical UAV-SN channel model as follows by accounting for occasional building blockage. For each SN $k$, let $c_{k,n}$ denote the binary UAV-SN channel state in each time slot $n$, where $c_{k,n}=1$  and $c_{k,n}=0$ represent respectively the LoS and NLoS states. Following the probabilistic LoS channel model \cite{al2014optimal}, the uncertain channel state $c_{k,n}$ is assumed to be \emph{independently} distributed over different time slots, {\color{black}{which is practically valid when the number of time slots is sufficiently large since the actual UAV-ground CSI correlation over time  can be averaged out.}}\footnote{\color{black}{The actual UAV-ground CSI correlation will be taken into account in the proposed online design.}} Moreover,  in each time slot, the  LoS probability, denoted by $\mathbb{P}(c_{k,n}=1)$, is a function of the UAV-SN elevation angle.  Among others, one commonly used model to  approximate the LoS probability is based on a \emph{simple} logistic function in the form of 
\begin{equation}
\mathbb{P}(c_{k,n}=1)=\frac{1}{1+a e^{-b(\theta_{k,n}-a)}},\label{Eq:LoSProb}
\end{equation} where $\theta_{k,n}$ is the elevation angle between the UAV and SN $k$ in  time slot $n$, given by
 \begin{equation}
\theta_{k,n}=\frac{180}{\pi}\arctan\l(\frac{z_n}{|| {\mathbf{q}}_n-{\mathbf{w}}_k||}\r),\label{Eq:angle}
\end{equation} and  $a$ and $b$ are modeling parameters to be specified. However, this model is obtained by curve-fitting an approximate mathematical expression for the LoS probability under certain assumptions (e.g., buildings are evenly spaced between the UAV and SNs) \cite{data2003prediction} and thus may be practically \emph{inaccurate}. In this paper, we improve the accuracy of the probabilistic LoS channel model by applying the combined simulation and data regression method. To be specific, we firstly simulate a Manhattan-type city as shown in Fig.~\ref{FigCity} according to the typical parameters in built-up environments \cite{data2003prediction} and then compute the LoS probability under different UAV-SN elevation angles (see Appendix~\ref{App:SimuLoS} for  the detailed simulation method). Next, we apply the data regression method to seek a model that can well fit the simulation data for the LoS probability versus the elevation angle in different environments. As shown in Fig.~\ref{Fig:LoSProb}, the LoS probability, re-denoted by $P^{\rm{L}}_{k,n}\overset{\triangle}{=}\mathbb{P}(c_{k,n}=1)$ for convenience, can be largely approximated by a \emph{generalized logistic}  function in the following new  form
\vspace{-3pt}
\begin{equation}
P^{\rm{L}}_{k,n}=B_3+\frac{B_4}{1+e^{-(B_1+B_2\theta_{k,n})}},\label{Eq:LoSProb}
\end{equation} 
where  $B_1<0$, $B_2>0$,  $B_4>0$, and $B_3$ are constants with $B_3+B_4=1$, which are   determined by the specific environment.  The corresponding NLoS probability can be obtained as $P^{\rm{N}}_{k,n}\overset{\triangle}{=}\mathbb{P}(c_{k,n}=0)=1-P^{\rm{L}}_{k,n}$.  Then the large-scale channel power gain between the UAV and SN $k$ in each time slot $n$, including both the path loss and shadowing, can be approximately modeled by 
\begin{equation}\label{Eq:Channel}
h_{k,n}=c_{k,n} h_{k,n}^{\rm{L}}+(1-c_{k,n})h^{\rm{N}}_{k,n},
\end{equation}
 where 
 \begin{equation}\label{Eq:BinaryChannel}
 h_{k,n}^{\rm{L}}= \beta_0 d_{k,n}^{-\alpha_{\rm{L}}}, ~~ h^{\rm{N}}_{k,n}=\mu \beta_0  d_{k,n}^{-\alpha_{\rm{N}}}
 \end{equation}
 denote respectively the channel power gains conditioned on the LoS and NLoS states,
 $\beta_0$ is the average channel power gain at a reference distance of $d_0=1$ m in the LoS state, $\mu<1$ represents the additional signal attenuation factor due to the NLoS propagation, $\alpha_{\rm{L}}$ and $\alpha_{\rm{N}}$  denote respectively the average path loss exponents for the LoS and NLoS states, with $2\le \alpha_{\rm{L}} <\alpha_{\rm{N}}\le 6$ in practice, and \begin{equation}
d_{k,n}=\sqrt{|| {\mathbf{q}}_n-{\mathbf{w}}_k||^2+z_n^2} \label{Eq:Distance}
\end{equation} is the distance between the UAV and SN $k$ in time slot $n$.

\subsection{Data Collection Model}
{\color{black}We assume that the SNs are powered by energy harvesting and thus have sufficient energy for performing the best-effort data-transmission policy, so that they can send data to the UAV at their maximum transmit power $P_k$ in their scheduled transmission slots  and otherwise remain in the silent mode.}
 Let $a_{k,n}$ denote the binary communication scheduling variable for SN $k$ in time slot $n$, where SN $k$ transmits if  $a_{k,n}=1$ and keeps silent  otherwise. 
In each time slot, we assume that only one SN is scheduled for transmission, leading to the following  scheduling constraints
\begin{align}
&\sum_{k=1}^{K} a_{k,n}\le 1, \quad \forall n\in\mathcal{N},\qquad
 a_{k,n}\in\{0,1\},  \quad \forall k\in\mathcal{K},n\in\mathcal{N}.\label{Eq:P1InterCons}
\end{align}
If SN $k$ is scheduled, the corresponding maximum achievable rate from the SN, denoted by $r_{k,n}$ in bits/second/Hertz (bps/Hz), is given by
\begin{align}
r_{k,n}=\log_2\l(1+\dfrac{h_{k,n} P_k}{\sigma^2 \Gamma}\r),\label{Eq:Rate}
\end{align} 
where $h_{k,n}$ is the real-time (large-scale)  channel power gain given in \eqref{Eq:Channel}, $\sigma^2$ denotes the receiver noise power, and $\Gamma>1$ is the signal-to-noise ratio (SNR) gap between the practical modulation-and-coding  scheme and the theoretical Gaussian signaling.\footnote{For simplicity, we assume that each time slot consists of a large number of  fading blocks due to  small-scale fading, and their effects have been averaged out in each time slot by employing a sufficiently long channel code; thus, the rate approximation given in \eqref{Eq:Rate} is practically valid. } Combining \eqref{Eq:Rate} and \eqref{Eq:Channel}--\eqref{Eq:BinaryChannel}   yields the following real-time \emph{channel-state-dependent} achievable rate
\begin{equation}
r_{k,n}=c_{k,n} r^{\rm{L}}_{k,n}+(1-c_{k,n})r^{\rm{N}}_{k,n},\label{Eq:StateRate}
\end{equation}
where 
\begin{align}
r^{\rm{L}}_{k,n}=\log_2\l(1+\dfrac{\gamma_k}{d_{k,n}^{\alpha_{\rm{L}}}}\r), ~~
r^{\rm{N}}_{k,n}=\log_2\l(1+\dfrac{\mu \gamma_k}{d_{k,n}^{\alpha_{\rm{N}}}}\r)\label{Eq:LNrate}
\end{align}
denote respectively the achievable rates conditioned on the LoS and NLoS states, and $\gamma_k=\frac{\beta_0 P_k}{\sigma^2 \Gamma}$.

\section{Problem Formulation and Proposed Hybrid Design}\label{Sec:Problem}

Consider the UAV-enabled data collection in a Manhattan-type city where the buildings are randomly and uniformly generated as described in Appendix~\ref{App:SimuLoS}. We assume that, prior to the UAV's flight, only the knowledge of SNs' locations and the probabilistic LoS channel model is known, while the UAV can estimate the instantaneous CSI perfectly  with individual SNs in real time along its flight.\footnote{In practice, the UAV can only estimate the CSI by receiving signals from the SNs within its communication coverage. Nevertheless, our assumption of the UAV knowing the CSI  with all SNs does not compromise the above practicability  since the SNs far away from the UAV are expected not to be scheduled for transmission even if their CSI is known at the UAV.} 

Our objective is to maximize the minimum average data collection rate from all the SNs for the UAV in one single operation. Under the constraints on the UAV trajectory and communication scheduling, the optimization problem can be formulated as follows.
\begin{subequations}
\begin{align}
\max_{\mathbf{Q}, \mathbf{Z}, \mathbf{A},\eta} ~& \eta \nn \\  
\text{(P1)}\qquad\text{s.t.}~&\frac{1}{N}\sum_{n=1}^{N}  a_{k,n}  r_{k,n}\ge\eta, ~\quad\qquad\forall k\in \mathcal{K},\label{Eq:P1MinRateCons}\\
~&||{\mathbf q}_{n+1}-{\mathbf q}_n||\le S_{\rm{xy},\max},   ~~\quad \forall n \in\mathcal{N}, \label{Eq:P1ConsStar}\\
& |z_{n+1}-z_n|\le S_{\rm{z},\max}, \quad\qquad\forall n \in\mathcal{N}, \label{Eq:P1VerSpeed}\\
& ({\mathbf q}_1^T, z_1)=({\mathbf{q}}_I^{T}, z_I), ~~ ({\mathbf q}_{N+1}^T, z_{N+1})=({\mathbf{q}}_F^{T}, z_F),\\
& H_{\min}\le z_n\le H_{\max}, ~\quad\qquad\forall n \in\mathcal{N},\label{Eq:zmin}\\
&\sum_{k=1}^{K} a_{k,n}\le 1, ~~~\qquad\quad\qquad\forall n\in\mathcal{N}, \label{Eq:ScheCons}\\
&a_{k,n}\in\{0,1\},  ~~~\qquad\qquad\quad\forall k\in\mathcal{K}, n\in\mathcal{N}, \label{Eq:P1InterCons}
\end{align}
\end{subequations}
where $\mathbf{Q}=\{\mathbf{q}_n\}_{n=1}^{N+1}$, $\mathbf{Z}=\{z_n\}_{n=1}^{N+1}$, and $\mathbf{A}=\{a_{k,n}, \forall k\}_{n=1}^{N}$.

The optimal solution to problem (P1), in general, is difficult to obtain due to the lack of the \emph{complete} UAV-SNs CSI at all possible UAV locations in the 3D region of interest.\footnote{\color{black}{Note that even with complete UVA-SNs CSI, under our proposed probabilistic LoS channel model, problem (P1) is still intractable since the corresponding rate function is UAV location-dependent rather than a simple function w.r.t. the UAV-SN distance only, thus rendering the dynamic programming  or graph theory based approach proposed in \cite{zhang2018cellular} for trajectory optimization inapplicable. Thus, the optimal trajectory can only be obtained by an exhaustive search.}}
To address this difficulty, a key observation is that, in addition to the probabilistic LoS channel model which is known \emph{a priori}, the UAV can obtain the \emph{instantaneous} CSI with SNs in \emph{real time} along its flight, which allows the UAV to online adjust its trajectory and communication scheduling adaptive to the random  building blockage. Motivated by this, we propose in this paper a \emph{novel} and \emph{general} method to derive a {\color{black}suboptimal solution  to problem (P1)}, called \emph{hybrid offline-online optimization}, by leveraging both the statistical and real-time CSI. Our proposed method  consists of the following two phases as illustrated in Fig.~\ref{Fig:Hybrid}, which are briefly described as follows and will be elaborated in more details in the subsequent sections.
\begin{itemize}
\item[1)] \textbf{Offline phase}: Prior to the UAV's flight, we design an \emph{initial} UAV trajectory and communication scheduling policy  based on merely the probabilistic LoS channel model. The policy  is computed offline by solving an optimization problem to maximize the minimum \emph{expected} rate from all the SNs. The resultant 3D UAV trajectory yields a \emph{statistically favorable} UAV path that specifies the route (flying direction) the UAV follows along the trajectory, characterized by a sequence of ordered waypoints and line segments connecting them. 
\item[2)] \textbf{Online phase}:  In the online phase, we fix the UAV path (waypoints) as that obtained from the offline phase. Then, at each waypoint, without changing the  UAV flying direction, we formulate an LP to maximize the \emph{updated} minimum expected  rate from all the SNs via adjusting the UAV (horizontal and vertical) flying speeds for the remaining line segments of the offline optimized path  as well as its communication scheduling  with SNs over them.
\end{itemize} 
Note that different from the prior  joint design of UAV trajectory and communication scheduling (see e.g.,\cite{wu2018joint,zhan2018energy,you20193d}), our proposed method \emph{decouples} the design   into the UAV path optimization in the offline phase, followed by the real-time adjustment of  the UAV flying speeds and communication scheduling in the online phase. This  is motivated by the fact that the path optimization requires solving a time-consuming non-linear optimization problem (as will  be detailed in Section~\ref{Sec:Off}) which is desired to be implemented offline; while the online adaptation  can be designed  by solving an LP, which requires low  computational complexity and thus can be implemented at the UAV in real time. 

\begin{figure}[t]
\begin{center}
\includegraphics[height=6.2cm]{./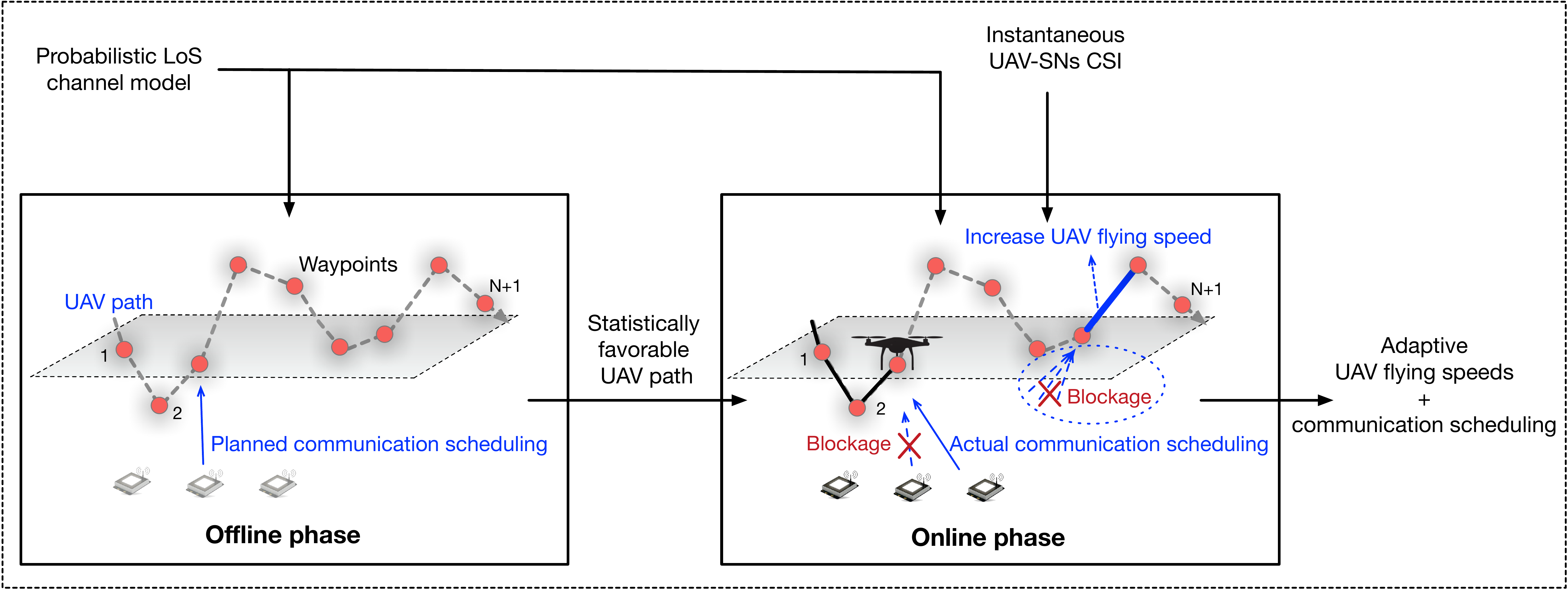}
\caption{The proposed hybrid offline-online optimization for UAV-enabled WSNs.}
\label{Fig:Hybrid}
\end{center}
\end{figure}
Despite low real-time computational complexity, our proposed hybrid design is expected to achieve superior  rate performance owing to the following reasons. First, the offline phase ensures  a statistically favorable UAV path that well balances the angle-distance trade-off for maximizing the average rates with SNs over the ensemble of city realizations. Second, as compared to the offline optimized policy which may result in considerable rate loss due to the random and unknown building blockage in the actual environment, our proposed online design, built upon the offline obtained path, endows the UAV with \emph{self-adaptation} to actual environment for further improving the rate performance. For instance, as illustrated in Fig.~\ref{Fig:Hybrid}, if an SN scheduled for data transmission at an arbitrary line segment according to the offline policy encounters building blockage in real time, the UAV can exploit the \emph{macro-diversity} to schedule another SN  that happens to be in the LoS  state for transmission. In a more challenging scenario, if all the SNs scheduled for transmission by the offline policy  are trapped in the NLoS states in real time, the UAV can fly over this line segment at the maximum  (horizontal and vertical) speeds  to save time for future transmission of SNs with better channel states. On the contrary, under highly   favorable channels with some SNs, the UAV can also slow down in  flying   to collect more data from them.

\section{Proposed Offline Design}\label{Sec:Off}
This section aims to offline design  the 3D UAV trajectory and communication scheduling for maximizing the minimum expected data collection rate from all the SNs under the probabilistic LoS channel model. The computed UAV path constituting a sequence of ordered waypoints and line segments will be utilized in the online design  in the next section.

\subsection{Problem Transformation}
In the offline phase, we focus on designing the initial 3D UAV trajectory and communication scheduling to achieve statistically favorable rate performance under the probabilistic LoS channel model. To this end, we first derive the expected rate from each SN, say SN $k$, in time slot $n$ as follows by combing \eqref{Eq:StateRate} and \eqref{Eq:LoSProb}.
\begin{equation}
\mathbb{E}[r_{k,n}]=P^{\rm{L}}_{k,n} r^{\rm{L}}_{k,n}+(1-P^{\rm{L}}_{k,n})r^{\rm{N}}_{k,n}, \quad \forall k, n.\label{Eq:ExpRate}
\end{equation}
Then problem (P1) is recast as follows with the aim  to maximize the minimum expected (average) rate from all the SNs, where the achievable rate $r_{k,n}$ is replaced by the expected rate $\mathbb{E}[r_{k,n}]$ given in \eqref{Eq:ExpRate}.
\begin{subequations}
\begin{align}
\max_{\mathbf{Q}, \mathbf{Z}, \mathbf{A}, \mathbf{\Theta}, \eta} ~& \eta \nn \\  
\text{(P2)}~~~~~~\text{s.t.}~&\frac{1}{N}\sum_{n=1}^N a_{k,n} \mathbb{E}[r_{k,n}]\ge\eta, \quad\forall k,\\
& \theta_{k,n}=\frac{180}{\pi}\arctan\l(\frac{z_n}{|| {\mathbf{q}}_n-{\mathbf{w}}_k||}\r), ~~\forall k,n, \label{Eq:OffAngleCons}\\
&\eqref{Eq:P1ConsStar}-\eqref{Eq:P1InterCons},\nn
\end{align}
\end{subequations}
where $\mathbf{\Theta}=\{\theta_{k,n}, \forall k\}_{n=1}^N$.

Problem (P2) is difficult to solve due to the non-concave rate function $\mathbb{E}[r_{k,n}]$ and the non-convex  binary scheduling constraints \eqref{Eq:P1InterCons}. To be specific, one can observe from \eqref{Eq:LoSProb}--\eqref{Eq:Distance} and \eqref{Eq:ExpRate} that $\mathbb{E}[r_{k,n}]$ is a highly complicated function of the 3D UAV  trajectory, due to  not only  the LoS/NLoS achievable rates but also the LoS probability. It is worth mentioning that in the existing works that consider  the probabilistic LoS channel model (e.g., \cite{esrafilian2018learning,zeng2019energy}), the expected rate is usually  approximated by (in contrast to that given in  \eqref{Eq:ExpRate})  
\begin{equation}
\mathbb{E}[r_{k,n}]\approx \bar{r}^{\rm h}_{k,n}\triangleq  \log_2\l(1+\dfrac{\mathbb{E}[{h}_{k,n}] P_k}{\sigma^2 \Gamma}\r),\label{Eq:TradRateAppr}
\end{equation}
where $\mathbb{E}[{h}_{k,n}]=P^{\rm{L}}_{k,n} h_{k,n}^{\rm{L}}+(1-P^{\rm{L}}_{k,n})h^{\rm{N}}_{k,n}$. {\color{black}Such an approximation  cannot guarantee the rate performance since the resultant  approximate optimization problem indeed  maximizes an upper bound of the expected rate due to Jensen's inequality and the gap is non-negligible because of the significant disparity between the channel strengths under the LoS and NLoS states.}
To achieve more accurate approximation, 
an important observation is that, given the UAV's location, the rate in the LoS state is practically much larger than that in the NLoS state due to the additional signal attenuation $\mu$ and a larger  path loss exponent $\alpha_{\rm N}$ (see \eqref{Eq:BinaryChannel}). This  implies that we can   lower-bound the expected rate function in \eqref{Eq:ExpRate} as below that only accounts for the expected rate in the LoS state and thus is achievable, i.e.,
\begin{align}
\!\!\!\mathbb{E}[{r_{k,n}}]&\ge P^{\rm{L}}_{k,n} r^{\rm{L}}_{k,n}\nn\\
&=\l(B_3+\frac{B_4}{1+ e^{-(B_1+B_2 \theta_{k,n})}}\r)\times \log_2\l(1+\dfrac{\gamma_k}{(|| {\mathbf{q}}_n-{\mathbf{w}}_k||^2+z_n^2)^{\alpha_{\rm{L}}/2}}\r)\overset{\triangle}{=}\bar{r}^{\rm{L}}_{k,n}.  \label{Eq:ApprExpRate}
\end{align}
{\color{black} An illustrative example is provided as follows for demonstrating the improved accuracy and achievability of the proposed expected-rate approximation. The system parameters are given by:  $\mu=20$ dB, $\alpha_{\rm L}=2.5$, $\alpha_{\rm N}=3.5$, $\beta_0=-60$ dB, and $\gamma_k=60$ dB. In time slot $n$ and for a typical user $k$ with $d_{k,n}=50$ m, 
the corresponding UAV-SN channel power gain  conditioned on the LoS and NLoS states are obtained as $h_{k,n}^{\rm L}=-102.5$ dB and $h_{k,n}^{\rm N}=-139.5$ dB, respectively. As such, the achievable rates in the LoS and NLoS states are given by $r_{k,n}^{\rm L}=5.85$ bps/Hz and $r_{k,n}^{\rm N}=0.016$ bps/Hz, respectively. For a typical LoS probability, e.g., $P^{\rm L}_{k,n}=0.5$, the actual expected rate is $\mathbb{E}[{r_{k,n}}]=2.93$ bps/Hz. Thus, the newly-approximated expected rate, obtained as  $\bar{r}_{k,n}^{\rm L}=2.92$ bps/Hz,  is achievable and   much more accurate than the conventionally overestimated expected rate  $\bar{r}^{\rm h}_{k,n}=4.87$ bps/Hz.} 
Based on the above expected-rate lower bound, problem (P2) can be reformulated into the following approximate form.
\vspace{-5pt}
\begin{subequations}
\begin{align}
\max_{\mathbf{Q}, \mathbf{Z}, \mathbf{A}, \mathbf{\Theta}, \eta }  &~\eta \nn \\  
\text {(P3)}~~~~~\text{s.t.}~~&\frac{1}{N}\sum_{n=1}^N a_{k,n} \bar{r}^{\rm{L}}_{k,n}\ge\eta, \quad\forall k, \label{Eq:OffMinRateCons}\\
& \eqref{Eq:P1ConsStar}-\eqref{Eq:P1InterCons}, \eqref{Eq:OffAngleCons}.\nn
\end{align}
\end{subequations}

\subsection{Proposed Algorithm for Problem (P3)}

Problem (P3) is still challenging to solve due to the coupled horizontal and vertical trajectory variables in the non-convex rate constraints \eqref{Eq:OffMinRateCons} and the non-affine elevation-angle constraints \eqref{Eq:OffAngleCons}, as well as the  integer variables in the transmission scheduling constraints \eqref{Eq:P1InterCons}.  To tackle these difficulties, we first relax the integer constraints for the communication scheduling, leading to the following relaxed  problem
\begin{subequations}
\begin{align}
\max_{\mathbf{Q}, \mathbf{Z}, \mathbf{A}, \mathbf{\Theta}, \eta } &~~\eta \nn \\  
\text{(P4)}\qquad \text{s.t.}~~&0\le a_{k,n}\le1, \quad \forall k, n,\label{Eq:RelaxScheduling}\\
& \eqref{Eq:P1ConsStar}-\eqref{Eq:ScheCons}, \eqref{Eq:OffAngleCons}, \eqref{Eq:OffMinRateCons}.\nn
\end{align}
\end{subequations}
Then, to address the non-affine constraints \eqref{Eq:OffAngleCons}, 
we can prove by contradiction that the optimal solution to problem (P4) is the same as that to the following further relaxed problem
\begin{subequations}
\begin{align}
\max_{\mathbf{Q}, \mathbf{Z}, \mathbf{A}, \mathbf{\Theta}, \eta } &~~\eta \nn \\  
\text{(P5)}\qquad \text{s.t.}~~
& \theta_{k,n}\le \frac{180}{\pi}\arctan\l(\frac{z_n}{|| {\mathbf{q}}_n-{\mathbf{w}}_k||}\r), \forall k,n,\label{Eq:RalxTheta}\\
& \eqref{Eq:P1ConsStar}-\eqref{Eq:ScheCons},  \eqref{Eq:OffMinRateCons},\eqref{Eq:RelaxScheduling}.\nn
\end{align}
\end{subequations}
However, problem (P5) is still non-convex for which the optimal solution is hard to obtain. As such, we  propose an efficient iterative  algorithm in the following  based on  BCD to obtain  a suboptimal  solution to it. 

\subsubsection{Communication Scheduling Optimization}
Given any feasible 3D  UAV trajectory $\{\mathbf{Q}, \mathbf{Z}\}$, problem (P5) can be rewritten as the following problem
\vspace{-5pt}
\begin{subequations}
\begin{align}
\max_{\mathbf{A}, \eta } ~~&\eta \nn \\  
\text{(P6)}\qquad\text{s.t.}~~&\eqref{Eq:ScheCons}, \eqref{Eq:OffMinRateCons},
\eqref{Eq:RelaxScheduling}.\nn
\end{align}
\end{subequations}
Problem (P6) is a standard LP which  can be efficiently solved by existing solvers, e.g., CVX \cite{CVX}. Note that the continuous communication scheduling obtained from solving problem (P6) can be reconstructed to the binary scheduling using the method in \cite{wu2018joint} without compromising the optimality.

\subsubsection{UAV Horizontal Trajectory Optimization}
Given any feasible communication scheduling, $\mathbf{A}$, and UAV vertical trajectory, $\mathbf{Z}$, problem (P5)  reduces to the  problem below for optimizing the UAV horizontal trajectory.
\begin{subequations}
\begin{align}
\max_{\mathbf{Q}, \mathbf{\Theta},\eta }  ~~&\eta \nn \\  
\text{(P7)}\qquad\text{s.t.}~~&{\mathbf q}_1={\mathbf{q}}_I, \quad {\mathbf q}_{N+1}={\mathbf{q}}_F, \label{Eq:P8StarEnd} \\
&\eqref{Eq:P1ConsStar},\eqref{Eq:OffMinRateCons}, \eqref{Eq:RalxTheta}.\nn
\end{align}
\end{subequations}
To solve this non-convex optimization problem, we first introduce an important lemma as follows.

\begin{lemma}\label{Lem:ConvexRateFunc}\emph{Given $\gamma\ge0$ and $\alpha\ge2$,  $\psi(x,y)\overset{\triangle}{=}\(B_3+\frac{B_4}{x}\r)\log_2\l(1+\frac{\gamma}{y^{\alpha/2}}\r)$ is a convex function for $x>0$ and $y>0$.}
\end{lemma}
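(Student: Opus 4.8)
The statement asserts joint convexity of $\psi(x,y)=\bigl(B_3+\frac{B_4}{x}\bigr)\log_2\bigl(1+\frac{\gamma}{y^{\alpha/2}}\bigr)$ on the positive orthant. My plan is to write $\psi(x,y)=f(x)\,g(y)$ with $f(x)=B_3+\frac{B_4}{x}$ and $g(y)=\log_2\bigl(1+\frac{\gamma}{y^{\alpha/2}}\bigr)$, establish the relevant sign and monotonicity/convexity properties of each factor, and then invoke the standard fact that a product of two functions of disjoint variable sets is jointly convex provided each factor is nonnegative, convex, and monotone in a compatible direction. Concretely, I would verify: (i) $f(x)>0$ and $f$ is convex and decreasing on $x>0$, since $f'(x)=-B_4/x^2<0$ and $f''(x)=2B_4/x^3>0$ using $B_4>0$; and also $B_3+B_4/x>B_3$, and combined with $B_3+B_4=1$, $B_3\ge 0$ is not guaranteed in general, so I must be a little careful — but on the feasible region $x$ corresponds to $1+e^{-(B_1+B_2\theta)}$ which lies in $(1,2)$ roughly, hence $f(x)=B_3+B_4/x>B_3+B_4/2$; I would state the nonnegativity of $f$ as following from the modeling constraints, or more safely observe $f(x)=P^{\rm L}\ge 0$ directly since it is a probability. (ii) $g(y)\ge 0$ (as $\gamma\ge 0$), and $g$ is convex and decreasing on $y>0$.

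The main technical step is verifying that $g(y)=\log_2\bigl(1+\gamma y^{-\alpha/2}\bigr)$ is convex and decreasing in $y$ for $\alpha\ge 2$. Decreasing is immediate since $y\mapsto \gamma y^{-\alpha/2}$ is decreasing and $\log$ is increasing. For convexity I would differentiate twice: with $u=\gamma y^{-\alpha/2}$, $g'(y)=\frac{1}{\ln 2}\cdot\frac{u'}{1+u}$ where $u'=-\frac{\alpha}{2}\gamma y^{-\alpha/2-1}<0$, and then $g''(y)=\frac{1}{\ln2}\cdot\frac{u''(1+u)-(u')^2}{(1+u)^2}$. Here $u''=\frac{\alpha}{2}\bigl(\frac{\alpha}{2}+1\bigr)\gamma y^{-\alpha/2-2}>0$. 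A short computation shows the numerator $u''(1+u)-(u')^2 = \gamma^2\frac{\alpha}{2}\bigl(\frac{\alpha}{2}+1\bigr) y^{-\alpha-2}\cdot(\text{terms})$ plus $u''$; after collecting, the dominant cross-term comparison reduces to checking $\bigl(\frac{\alpha}{2}+1\bigr)(1+u) \ge \frac{\alpha}{2}u$, i.e. $\frac{\alpha}{2}+1 + u \ge 0$, which holds trivially. So $g''\ge 0$. (I expect the actual bookkeeping to be the one genuinely fiddly part, but it is elementary and the inequality $\alpha\ge 2$ — in fact $\alpha>0$ suffices — makes it go through comfortably.)

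With both factors nonnegative, convex, and decreasing, I would finish by the product rule for convexity of separated-variable products. The cleanest route is: for any $(x_1,y_1),(x_2,y_2)$ and $\lambda\in[0,1]$, write the convex combination, use convexity of $f$ and $g$ separately to get $f(\lambda x_1+(1-\lambda)x_2)\le \lambda f(x_1)+(1-\lambda)f(x_2)$ and similarly for $g$, and then multiply — this is valid because both bounds involve nonnegative quantities — and finally bound the cross terms $\lambda(1-\lambda)\bigl(f(x_1)g(y_2)+f(x_2)g(y_1)\bigr)$ against $\lambda(1-\lambda)\bigl(f(x_1)g(y_1)+f(x_2)g(y_2)\bigr)$ using the fact that $f$ and $g$ are both monotone in the same (decreasing) direction, which makes $\bigl(f(x_1)-f(x_2)\bigr)\bigl(g(y_1)-g(y_2)\bigr)$ have a sign that — after tracking it — yields the needed inequality. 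Alternatively, and more slickly, I would cite the known lemma (e.g., from Boyd–Vandenberghe-style arguments) that if $f\ge0$ convex on a convex set and $g\ge0$ convex on a convex set, and both are monotone nondecreasing (resp. both nonincreasing), then $f(x)g(y)$ is jointly convex; the nonincreasing case applies here. The main obstacle is purely the second-derivative verification for $g$; everything else is structural and routine.
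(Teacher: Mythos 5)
Your decomposition $\psi(x,y)=f(x)g(y)$ and the individual verifications (that $f(x)=B_3+B_4/x$ is convex and decreasing, and that $g(y)=\log_2(1+\gamma y^{-\alpha/2})$ is nonnegative, convex and decreasing) are fine, but the step you rely on to finish --- ``a product of two nonnegative, convex functions of \emph{disjoint} variables that are monotone in the same direction is jointly convex'' --- is not a valid lemma. The standard product rule of that form applies to a product $f(x)g(x)$ of the \emph{same} variable, where $(fg)''=f''g+2f'g'+fg''$ and the cross term $f'g'\ge 0$ exactly because the monotonicities agree. For disjoint variables the cross term $f'(x)g'(y)$ sits \emph{off the diagonal} of the Hessian $\begin{pmatrix} f''(x)g(y) & f'(x)g'(y)\\ f'(x)g'(y) & f(x)g''(y)\end{pmatrix}$, and positive semidefiniteness additionally requires the determinant condition $f''(x)f(x)\,g''(y)g(y)\ge \big(f'(x)\big)^2\big(g'(y)\big)^2$, which does not follow from nonnegativity, convexity and common monotonicity. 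Counterexample: $f(x)=1-x$ and $g(y)=1-y$ on $(0,1)$ are nonnegative, convex and decreasing, yet $f(x)g(y)=(1-x)(1-y)$ has Hessian $\begin{pmatrix}0&1\\1&0\end{pmatrix}$ with eigenvalues $\pm 1$ and is concave along lines $x+y=\mathrm{const}$. Your ``hands-on'' variant breaks at the same point: after multiplying the two convexity inequalities you need $(f(x_1)-f(x_2))(g(y_1)-g(y_2))\ge 0$, but since $x_1,x_2$ and $y_1,y_2$ vary independently you can always pick $x_1<x_2$ and $y_1>y_2$, making that product strictly negative. So the part you call ``structural and routine'' is in fact the whole difficulty, and the second-derivative check on $g$ alone does not close it.

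The paper's proof does the work your shortcut tries to avoid: it writes out the full Hessian of $\xi(x,y)=(B_3+B_4/x)\ln(1+\gamma/y^{\alpha/2})$ and shows $\mathbf{t}^T\nabla^2\xi\,\mathbf{t}\ge 0$ directly for the \emph{specific} factors at hand, using $\ln(1+1/a)\ge 1/(a+1)$ and $\alpha\ge 2$ to lower-bound the diagonal entries enough to absorb the off-diagonal term into a sum of squares (plus terms whose signs hinge on $B_3+B_4/x\ge 0$). That determinant-type estimate is the actual content of the lemma; to repair your argument you would have to verify $f''f\cdot g''g\ge (f'g')^2$ for these particular $f$ and $g$, which brings you back to essentially the paper's computation. (A genuinely different clean route would be to show both factors are log-convex, since a sum of separated convex functions is jointly convex and $\exp$ preserves convexity, but $B_3+B_4/x$ fails to be log-convex precisely when $2B_3x+B_4<0$, so that road is also closed in general.)
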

\begin{proof}
See Appendix~\ref{App:ConvexRateFunc}.
\end{proof}

Using Lemma~\ref{Lem:ConvexRateFunc}, we can prove that $\bar{r}^{\rm{L}}_{k,n}$ in \eqref{Eq:ApprExpRate} is a convex function w.r.t. $(1+ e^{-{B_1+B_2 \theta_{k,n}}})$ and $(|| \mathbf{q}_n-\mathbf{w}_k||^2+z_n^2)$. As such, we can apply the SCA technique to approximate the rate function $\bar{r}^{\rm{L}}_{k,n}$ by its lower bound as follows using the first-order Taylor expansion.{\footnote{\color{black}The proposed offline design for the rotary-wing UAV can be extended to the fixed-wing UAV by adding the constraints of the minimum  UAV  horizontal and vertical flying speeds and handling these non-convex constraints by applying SCA techniques.}

\begin{lemma}\label{Lem:RateBound}\emph{
For any local UAV horizontal trajectory $\hat{\mathbf{Q}}$, $\bar{r}^{\rm{L}}_{k,n}$ given in \eqref{Eq:ApprExpRate} can be lower-bounded by
\vspace{-10pt}
\begin{align}\label{Eq:RTylor}
\bar{r}^{\rm{L}}_{k,n}&\ge \hat{\bar{r}}^{\rm{L}}_{k,n}-\hat{\Omega}_{k,n} (e^{-\phi_{k,n}}-e^{-\hat{\phi}_{k,n}})-\hat{\Psi}_{k,n}(|| \mathbf{q}_n-\mathbf{w}_k||^2-|| \hat{\mathbf{q}}_n-\mathbf{w}_k||^2) \nn\\
&\overset{\triangle}{=}\hat{\bar{r}}^{\rm{L},  \rm{lb}}_{k,n}, \qquad \forall k,n,
\end{align}
where $\phi_{k,n}=B_1+B_2\theta_{k,n}$, and the coefficients $\hat{\bar{r}}^{\rm{L}}_{k,n}$, $\hat{\Omega}_{k,n}$, $\hat{\Psi}_{k,n}$ and $\hat{\phi}_{k,n}$ are defined in Appendix~\ref{App:RateBound}. 
The equality holds at the point $\mathbf{q}_n=\hat{\mathbf{q}}_n$.}
\end{lemma}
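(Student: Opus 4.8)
The plan is to establish \eqref{Eq:RTylor} by invoking Lemma~\ref{Lem:ConvexRateFunc} together with the standard fact that a convex function lies above any of its first-order Taylor expansions (supporting hyperplanes). First I would recall from Lemma~\ref{Lem:ConvexRateFunc} that $\psi(x,y)=\l(B_3+\frac{B_4}{x}\r)\log_2\l(1+\frac{\gamma}{y^{\alpha/2}}\r)$ is jointly convex on $x>0$, $y>0$. Applying this with $x=1+e^{-\phi_{k,n}}$, $y=\|\mathbf{q}_n-\mathbf{w}_k\|^2+z_n^2$, $\gamma=\gamma_k$, $\alpha=\alpha_{\rm L}$, we see that $\bar{r}^{\rm L}_{k,n}=\psi\bigl(1+e^{-\phi_{k,n}},\,\|\mathbf{q}_n-\mathbf{w}_k\|^2+z_n^2\bigr)$ is convex in the pair $\bigl(1+e^{-\phi_{k,n}},\,\|\mathbf{q}_n-\mathbf{w}_k\|^2+z_n^2\bigr)$. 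Hence for any reference point $\hat{\mathbf{q}}_n$ (with the corresponding $\hat{\phi}_{k,n}$), convexity gives
\begin{equation}
\bar{r}^{\rm L}_{k,n}\ge \hat{\bar r}^{\rm L}_{k,n}+\frac{\partial\psi}{\partial x}\Big|_{\hat{\cdot}}\bigl(e^{-\phi_{k,n}}-e^{-\hat\phi_{k,n}}\bigr)+\frac{\partial\psi}{\partial y}\Big|_{\hat{\cdot}}\bigl(\|\mathbf{q}_n-\mathbf{w}_k\|^2-\|\hat{\mathbf{q}}_n-\mathbf{w}_k\|^2\bigr),\nn
\end{equation}
where the vertical coordinate $z_n$ is held fixed (it is a parameter in this subproblem), so only the $\mathbf{q}_n$-dependent part of $y$ varies. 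Identifying $\hat\Omega_{k,n}\overset{\triangle}{=}-\frac{\partial\psi}{\partial x}\big|_{\hat{\cdot}}$ and $\hat\Psi_{k,n}\overset{\triangle}{=}-\frac{\partial\psi}{\partial y}\big|_{\hat{\cdot}}$ yields exactly \eqref{Eq:RTylor}, and equality at $\mathbf{q}_n=\hat{\mathbf{q}}_n$ is immediate since both sides agree there.

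The one subtlety I would be careful about is the composition structure: $\bar{r}^{\rm L}_{k,n}$ is convex in the \emph{transformed} variables $u\overset{\triangle}{=}e^{-\phi_{k,n}}$ (equivalently $1+u$) and $v\overset{\triangle}{=}\|\mathbf{q}_n-\mathbf{w}_k\|^2$, not directly in $\mathbf{q}_n$ or $\theta_{k,n}$. This is fine for SCA purposes because both $u$ and $v$ are themselves convex functions of the underlying optimization variables — $e^{-\phi_{k,n}}=e^{-(B_1+B_2\theta_{k,n})}$ is convex in $\theta_{k,n}$ since $B_2>0$, and $\|\mathbf{q}_n-\mathbf{w}_k\|^2$ is convex (quadratic) in $\mathbf{q}_n$ — so linearizing in $u$ and $v$ via their own first-order expansions and then substituting produces a global lower bound that is concave (in fact affine after the linearization in $u$, and here we keep the exact $e^{-\phi_{k,n}}$ term which is convex, so the resulting constraint in \eqref{Eq:RTylor} is a valid convex restriction). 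Thus the derivation has two layers: convexity of $\psi$ from Lemma~\ref{Lem:ConvexRateFunc}, then the supporting-hyperplane inequality in $(u,v)$.

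The main obstacle is purely bookkeeping rather than conceptual: one must compute the partial derivatives $\partial\psi/\partial x$ and $\partial\psi/\partial y$ at the reference point and verify they match the coefficients $\hat\Omega_{k,n}$, $\hat\Psi_{k,n}$, $\hat\phi_{k,n}$, $\hat{\bar r}^{\rm L}_{k,n}$ as defined in Appendix~\ref{App:RateBound} — in particular $\partial\psi/\partial x=-\frac{B_4}{x^2}\log_2(1+\gamma y^{-\alpha/2})$ and $\partial\psi/\partial y=-\l(B_3+\frac{B_4}{x}\r)\frac{\alpha\gamma}{2\ln 2}\frac{y^{-\alpha/2-1}}{1+\gamma y^{-\alpha/2}}$, both evaluated at $x=1+e^{-\hat\phi_{k,n}}$ and $y=\|\hat{\mathbf{q}}_n-\mathbf{w}_k\|^2+z_n^2$, which are manifestly nonnegative multiples of the displayed correction terms. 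Once these identifications are recorded, the proof reduces to the one-line supporting-hyperplane argument above, and I would relegate the coefficient formulas to Appendix~\ref{App:RateBound} as the statement already indicates.
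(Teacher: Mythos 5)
Your proposal is correct and follows essentially the same route as the paper: invoke Lemma~\ref{Lem:ConvexRateFunc} for joint convexity of $\psi$ in the transformed variables $\bigl(1+e^{-\phi_{k,n}},\,\|\mathbf{q}_n-\mathbf{w}_k\|^2+z_n^2\bigr)$ and then apply the supporting-hyperplane (first-order Taylor) lower bound at the reference point, with $\hat{\Omega}_{k,n}$ and $\hat{\Psi}_{k,n}$ the negated partial derivatives there. The paper merely phrases this via the shifted function $\tilde{\psi}(x,y)=\psi(X+x,Y+y)$ linearized at $(0,0)$, which is the same computation, and your derivative formulas match the coefficients in Appendix~\ref{App:RateBound}.
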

\begin{proof}
See Appendix~\ref{App:RateBound}.
\end{proof}
For the  non-convex constraints \eqref{Eq:RalxTheta}, let 
\begin{equation}
v_{k,n}\overset{\triangle}{=}\arctan\l(\frac{z_n}{|| {\mathbf{q}}_n-{\mathbf{w}}_k||}\r).\label{Eq:v}
\end{equation} One can observe that, although $v_{k,n}$ is not a concave function w.r.t. ${\mathbf{q}}_n$, it is a convex function w.r.t. $(|| {\mathbf{q}}_n-{\mathbf{w}}_k||)$ since $\arctan(1/x)$ is convex for $x> 0$.
This useful property allows us to lower-bound $v_{k,n}$ as follows by using the SCA technique.

\begin{lemma}\label{Lem:vSCA}\emph{
For any local UAV horizontal trajectory $\hat{\mathbf{Q}}$, $v_{k,n}$ given  in \eqref{Eq:v} can be lower-bounded by
\vspace{-10pt}
\begin{align}
v_{k,n}\ge  \hat{v}_{k,n}-\hat{\Lambda}_{k,n}(|| {\mathbf{q}}_n-{\mathbf{w}}_k||-|| {\hat{\mathbf{q}}}_n-{\mathbf{w}}_k||)\overset{\triangle}{=} \hat{v}^{\rm{lb}}_{k, n},\qquad\forall k,n, 
\end{align}
where  the coefficients $ \hat{v}_{k,n}$ and $\hat{\Lambda}_{k,n}$ are respectively given by 
\begin{align}
 \hat{v}_{k,n}=\arctan\l(\frac{z_n}{|| {\hat{\mathbf{q}}}_n-{\mathbf{w}}_k||}\r),~~~\hat{\Lambda}_{k,n}=\frac{z_n}{|| {\hat{\mathbf{q}}}_n-{\mathbf{w}}_k||^2+z_n^2}.
\end{align}
The equality holds at the point $\mathbf{q}_n=\hat{\mathbf{q}}_n$.
}
\end{lemma}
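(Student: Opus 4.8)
The plan is to reduce the claim to the elementary fact that $\arctan(1/x)$ is convex on $(0,\infty)$ and then invoke the standard first-order underestimator property of convex differentiable functions. Concretely, fix the slot index $n$ and the node $k$, write $x\bydef\|\mathbf{q}_n-\mathbf{w}_k\|$ and $\hat{x}\bydef\|\hat{\mathbf{q}}_n-\mathbf{w}_k\|$, and treat $z_n>0$ (which holds since $z_n\ge H_{\min}>0$ by \eqref{Eq:zmin}) as a fixed constant. Then $v_{k,n}=g(x)$ with $g(x)\bydef\arctan\!\l(z_n/x\r)$, so it suffices to prove that $g$ is convex on $(0,\infty)$ and to identify the coefficients of its tangent line at $\hat{x}$.

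First I would establish the convexity of $g$. The cleanest route is to write $g(x)=h(x/z_n)$ with $h(t)\bydef\arctan(1/t)$, so that $g$ is the composition of $h$ with the increasing affine map $x\mapsto x/z_n$; since precomposition with an affine map preserves convexity, it is enough to check that $h$ is convex on $(0,\infty)$. Differentiating gives $h'(t)=-\frac{1}{1+t^2}$ and $h''(t)=\frac{2t}{(1+t^2)^2}>0$ for $t>0$, which yields convexity. (Equivalently one may compute $g''(x)$ directly and verify its sign.)

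Next I would apply the first-order condition for convex differentiable functions: for all $x,\hat{x}>0$, $g(x)\ge g(\hat{x})+g'(\hat{x})(x-\hat{x})$. A direct computation gives $g'(x)=-\frac{z_n}{x^2+z_n^2}$, hence $g'(\hat{x})=-\frac{z_n}{\hat{x}^2+z_n^2}=-\hat{\Lambda}_{k,n}$ and $g(\hat{x})=\arctan\!\l(z_n/\hat{x}\r)=\hat{v}_{k,n}$. Substituting back $x=\|\mathbf{q}_n-\mathbf{w}_k\|$ and $\hat{x}=\|\hat{\mathbf{q}}_n-\mathbf{w}_k\|$ reproduces exactly the inequality in the statement, and the equality at $\mathbf{q}_n=\hat{\mathbf{q}}_n$ is immediate because then $x=\hat{x}$.

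I do not anticipate a genuine obstacle here: this is a routine SCA lemma whose only real content is the one-line second-derivative sign check. The point to be mildly careful about is that the convexity holds with respect to the composite variable $\|\mathbf{q}_n-\mathbf{w}_k\|$, not $\mathbf{q}_n$ itself, so the linearization is carried out in that variable; when reinserted, the underestimator $\hat{v}^{\rm lb}_{k,n}$ is concave in $\mathbf{q}_n$ (a constant minus a nonnegative multiple of a norm), which is precisely what keeps the relaxed constraint \eqref{Eq:RalxTheta} convex — though that remark is beyond the lemma itself. A minor technicality is the implicit requirement $\|\mathbf{q}_n-\mathbf{w}_k\|>0$ (the UAV is not exactly above SN $k$), which is harmless and can be treated as a standing assumption, or handled by noting that $g$ extends continuously with $g(0^+)=\pi/2$.
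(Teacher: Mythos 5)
Your proof is correct and follows the same route the paper takes: the paper justifies Lemma~\ref{Lem:vSCA} by the one-line observation preceding it that $v_{k,n}$ is convex in $\|\mathbf{q}_n-\mathbf{w}_k\|$ because $\arctan(1/x)$ is convex for $x>0$, and then applies the first-order Taylor (SCA) underestimator, which is exactly your argument with the derivative computations written out. Your verification that $g'(\hat{x})=-z_n/(\hat{x}^2+z_n^2)=-\hat{\Lambda}_{k,n}$ and the remark on the edge case $\|\mathbf{q}_n-\mathbf{w}_k\|=0$ are consistent with the stated coefficients.
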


Using Lemmas~\ref{Lem:RateBound} and \ref{Lem:vSCA},  problem (P7) can be reformulated into the approximate form given  below, where $\bar{r}^{\rm{L}}_{k,n}$ in \eqref{Eq:OffMinRateCons} and $v_{k,n}$ in \eqref{Eq:RalxTheta} are replaced by their corresponding lower bounds.
\begin{subequations}
\begin{align}
\max_{\mathbf{Q}, \mathbf{\Phi}, \mathbf{\Theta},\eta} ~&\eta\nn \\  
\text{(P8)}\qquad\text{s.t.}~~&\frac{1}{N}\sum_{n=1}^{N} a_{k,n}  \hat{\bar{r}}^{\rm{L},  \rm{lb}}_{k,n}\ge \eta, \qquad\forall k,\label{Eq:P9Rbound}\\
\qquad\qquad &\phi_{k,n}=B_1+B_2\theta_{k,n}, \qquad \forall k,n, \label{Eq:utheta}\\
& \theta_{k,n}\le \frac{180}{\pi}   \hat{v}^{\rm{lb}}_{k, n}, \qquad\qquad\forall k,n,\label{Eq:P9Equality}\\
& \eqref{Eq:P1ConsStar}, \eqref{Eq:P8StarEnd},\nn
\end{align}
\end{subequations}
where $\mathbf{\Phi}=\{\phi_{k,n}, \forall k\}_{n=1}^{N}$.
The approximate problem  (P8) is a convex optimization problem and thus can be efficiently solved by using existing solvers, e.g., CVX. It is worth mentioning that, by approximating the non-convex constraints with their convex lower bounds, the feasible set of  problem (P8) is always a subset of problem (P7). This guarantees that solving problem (P8) gives a lower bound of the optimal objective value of problem (P7).

\subsubsection{UAV Vertical Trajectory Optimization}
Last, given any feasible communication scheduling, $\mathbf{A}$, and UAV horizontal trajectory, $\mathbf{Q}$, problem (P5)  reduces to the following problem for optimizing the UAV vertical trajectory.
\begin{subequations}
\begin{align}
\max_{\mathbf{Z},  \mathbf{\Theta}, \eta} ~~~&\eta  \nn\\  
\text{(P9)}\qquad \text{s.t.}\quad& z_1=z_I, ~~~z_{N+1}=z_F,\label{Eq:P11z}\\
&\eqref{Eq:P1VerSpeed}, \eqref{Eq:zmin}, \eqref{Eq:OffMinRateCons}, \eqref{Eq:RalxTheta}.\nn
\end{align}
\end{subequations}
Since problem (P9) has a similar form as problem (P7), we can apply a similar approach as for solving problem (P7) (i.e., applying the SCA technique to the constraints \eqref{Eq:OffMinRateCons}) and thereby  problem (P9) can be transformed to the following problem
\vspace{-5pt}
\begin{subequations}
\begin{align}
\max_{\mathbf{Z}, \mathbf{\Phi},  \mathbf{\Theta}, \eta} ~~&\eta  \nn\\ 
\text{(P10)}\qquad\text{s.t.}~~&\frac{1}{N}\sum_{n=1}^{N} a_{k,n} \check{\bar{r}}^{\rm{L},\rm{lb}}_{k,n}\ge \eta, \forall k, \label{Eq:P11RateCons}\\
& \eqref{Eq:P1VerSpeed}, \eqref{Eq:zmin},  \eqref{Eq:RalxTheta},\eqref{Eq:utheta}, \eqref{Eq:P11z},\nn
\end{align}
\end{subequations}
where $\check{\bar{r}}^{\rm{L},\rm{lb}}_{k,n}$ is the lower bound of $\bar{r}^{\rm{L}}_{k,n}$ given a local vertical trajectory $\check{\mathbf{Z}}$, which can be obtained by using a similar SCA technique as in Lemma~\ref{Lem:RateBound}. Note that in \eqref{Eq:RalxTheta}, $v_{k,n}$ is a convex function w.r.t. $z_n$, since $\arctan(x)$ is a concave function for $x>0$.  Thus, problem (P10) is a convex optimization problem, which can be optimally solved by using the  interior-point method in general. In practice, due to the lack of support for the function of $\arctan(x)$ in CVX, we can approximate  $\arctan(x)$ by its upper bound using the first-order Taylor expansion for simplicity.

\subsubsection{Overall Algorithm and Computational Complexity}\label{Sec:Complexity}

  
Using the results in the preceding subsections, an iterative algorithm can be proposed to 
 obtain a suboptimal solution to problem (P3)  by  alternately  optimizing the transmission scheduling of SNs, UAV horizontal  and vertical trajectories. In addition,  the initial UAV trajectory can be constructed as the straight flight from the initial location to the final location.  {\color{black}Next,  we discuss  the  computational complexity of the proposed  algorithm. Since the communication scheduling, and UAV horizontal and vertical trajectories are sequentially optimized in each iteration by using the CVX solver which is based on the standard interior-point method,  their individual  complexities scale as  $\mathcal{O}((KN)^{3.5}\log(1/\epsilon))$, $\mathcal{O}((N+KN)^{3.5}\log(1/\epsilon))$, and $\mathcal{O}((N+KN)^{3.5}\log(1/\epsilon))$,  respectively, given a solution accuracy of $\epsilon> 0$ \cite{ben2001lectures}. Then accounting for the BCD iterations with the complexity of $\log(1/\epsilon)$, the total computational complexity of the algorithm is thus  $\mathcal{O}((N+KN)^{3.5}\log^2(1/\epsilon))$, which is affordable for offline computation. Moreover, the algorithm is shown to always converge in our extensive simulations (see Section~\ref{Sec:SimOff}).} Last, the UAV path obtained from the offline phase is represented by the waypoints $\{({\mathbf{q}}_n^{T*}, z^*_n)\}_{n=1}^{N+1}$, which denote a suboptimal solution to problem (P3) computed by the iterative algorithm.
\vspace{-10pt}
\section{Proposed Online Design}\label{Sec:On}
Directly implementing the offline optimized UAV trajectory and communication scheduling policy in practice may suffer considerable rate loss due to the lack of online adaptation to the random  building blockage. This issue is addressed in this section by designing the online policy that adaptively adjusts the UAV flying speeds along the offline optimized UAV path as well as its communication scheduling with SNs based on the real-time UAV-SNs CSI  and SNs' individual amounts of data  received accumulatively.

To this end, the UAV path (instead of time) is discretized into $N$ line segments with the $(N+1)$ waypoints fixed as those obtained from the offline phase, i.e., $\{({\mathbf{q}}_n^{T*}, z^*_n)\}_{n=1}^{N+1}$ \cite{zeng2019energy}. At each line segment, the distance between the UAV and each  SN can be assumed to be approximately unchanged due to the  time discretization method adopted  in the offline phase. Unless otherwise stated, we reuse the notations w.r.t. the time slot under time discretization for  the current case w.r.t. the line segment for convenience. In the online phase, we assume that the UAV flies at a constant speed over each line segment horizontally as well as  vertically.  To enforce that the UAV follows the offline optimized path, the time durations spent  on traveling the horizontal and vertical distances at each line segment $n$ need to be identical and thus commonly denoted by $t_n, \forall n$. Then the UAV flying velocity at each line segment is given by 
\begin{equation} 
\mathbf{v}_n=\frac{\(\mathbf{q}^{T*}_{n+1}-\mathbf{q}^{T*}_{n}, z^*_{n+1}-z^*_n\r)^{T}}{t_n}, \quad \forall n\in\mathcal{N},\label{Eq:velocity}
\end{equation}
and the corresponding horizontal and vertical flying speeds are respectively  given by 
\begin{equation}
v_{{\rm{xy}},n}=\frac{||\mathbf{q}^*_{n+1}-\mathbf{q}^*_{n} ||}{t_n},~~~v_{{\rm{z}},n}=\frac{|z^*_{n+1}-z^*_n |}{t_n}, \quad \forall n\in\mathcal{N}.\label{Eq:flyspeed}
\end{equation}
As a result, the optimization of the UAV flying speeds at each line segment can be equivalently transformed to that of the traveling duration. For the proposed online policy, let  $ T^{\rm{re}}_n$ denote the remaining  flight duration from  the beginning of each line segment $n$, which evolves as
\vspace{-5pt}
\begin{equation}
T^{\rm{re}}_1=T_0,~~  T^{\rm{re}}_n= T^{\rm{re}}_{n-1}-t_{n-1}, n>1.
\end{equation}
Moreover, instead of using the communication scheduling  $a_{k,n}$ obtained  in the offline phase, we define $\tau_{k,n}\ge 0$ as the allocated data-transmission time for SN $k$ at line segment $n$. Then the amount of received data  (in bits/Hz) at the UAV from SN $k$ over the $n$-th line segment is given by $\tau_{k,n} r_{k,n}$ and the amount of  accumulatively received data up to  the beginning of each line segment $n$, denoted by $r^{\rm{ac}}_{k,n}$, evolves as  
\vspace{-5pt}
\begin{equation}
r^{\rm{ac}}_{k,1}=0, ~~~r^{\rm{ac}}_{k,n}=\sum_{m=1}^{n-1}  \tau_{k,m}  r_{k,m},~~ n>1, \forall k\in\mathcal{K}.
\end{equation}

To formulate the optimization problem in the online phase, since  the UAV re-optimizes its  policy at each waypoint $n$ including its traveling durations and communication durations with SNs over all the subsequent line segments $m=n, \cdots, N$, we re-denote $\tau_{k,m}$ and $t_{k,m}$ by $\tau_{k,m}^{(n)}$ and $t_m^{(n)}$ as the variables in the $n$-th policy  optimization. The online  policy needs to satisfy the following constraints. First, the finite UAV horizontal and vertical flying speeds enforce that 
\begin{equation}
v_{{\rm{xy}},m}\le V_{\rm{xy},\max},~~ v_{{\rm{z}},m}\le V_{\rm{z},\max},\quad m=n, \cdots, N.\footnote{\color{black}{The proposed  online design for the rotary-wing UAV can be readily extended to the fixed-wing UAV by adding the similar constraints of the  minimum UAV  horizontal and vertical flying speeds and converting them into the equivalent constraint on the flying duration over each line segment.}}\label{Eq:pathspeed}
\end{equation}
Combining  \eqref{Eq:pathspeed} and \eqref{Eq:flyspeed} yields
\begin{equation}
t_m^{(n)}\ge\max\l\{\frac{||\mathbf{q}^{*}_{m+1}-\mathbf{q}^*_{m} ||}{V_{\rm{xy},\max}}, \frac{|z^*_{m+1}-z^*_m |}{V_{\rm{z},\max}}\r\}\overset{\triangle}{=}\hat{t}_m, \quad m=n, \cdots, N.\label{Eq:OnlineEachT}
\end{equation}
Second, the total  traveling  duration over  the remaining line segments should satisfy $\sum_{m=n}^{N} t_m^{(n)} \le  T^{\rm{re}}_n$.
Last, the communication scheduling constraint in \eqref{Eq:ScheCons} is re-written by 
\begin{equation}
\sum_{k=1}^K \tau_{k,m}^{(n)} \le t_m^{(n)},\quad  m=n, \cdots, N.\label{Eq:OnlineTimeShare}
\end{equation} 
For the rate performance, let $r_{k,n}^{\rm{ave}}$ denote the \emph{updated} expected average rate from SN $k$ at waypoint $n$, which is given by
\begin{equation}
r_{k,n}^{\rm{ave}}=\frac{1}{T_0}\(r^{\rm{ac}}_{k,n}+\tau_{k,n}^{(n)} r_{k,n}+ \sum_{m=n+1}^{N} \tau_{k,m}^{(n)} \mathbb{E}[r_{k,m}]\),~~\forall k\in\mathcal{K},\label{Eq:UpAve}
\end{equation}
accounting for the amounts of data received accumulatively, $r^{\rm{ac}}_{k,n}$, achieved in the current line segment, $\tau_{k,n}^{(n)} r_{k,n}$, and  expected to be received over subsequent line segments, $\sum_{m=n+1}^{N} \tau_{k,m}^{(n)} \mathbb{E}[r_{k,m}]$. Note that in \eqref{Eq:UpAve}, $r_{k,n}$ is determined by the instantaneous CSI (see \eqref{Eq:StateRate}) and $\{\mathbb{E}[r_{k,m}],\forall k\}_{m=n+1}^N$ can be explicitly obtained from \eqref{Eq:ExpRate} with the LoS probabilities determined by the fixed waypoints according to \eqref{Eq:angle}--\eqref{Eq:LoSProb}. 
Based on the above discussion, the optimization problem for the online adaptation at each waypoint can be modified from that for the offline design, i.e., problem (P2), as formulated below.
\vspace{-5pt}
\begin{subequations}
\begin{align}
\max_{  \boldsymbol{\tau}^{(n)}, \mathbf{T}^{(n)},  \eta^{(n)}} ~~&\eta^{(n)}  \nn\\ 
\text{(P11)}~\qquad\text{s.t.}~~~~~~~&
r_{k,n}^{\rm{ave}}\ge \eta^{(n)},\quad \forall k\in\mathcal{K}, \label{Eq:P12RateCons}\\
&\tau_{k,m}^{(n)}\ge0,  \quad\quad\forall k\in \mathcal{K}, m=n, \cdots, N,\\
& \eqref{Eq:OnlineEachT}-\eqref{Eq:OnlineTimeShare},\nn
\end{align}
\end{subequations}
where $\boldsymbol{\tau}^{(n)}=\{\tau_{k,m}^{(n)}, \forall k\}_{m=n}^N$ and  $\mathbf{T}^{(n)}=\{ t_m^{(n)}\}_{m=n}^N$.

Problem (P11) is a standard LP and thus can be efficiently solved by existing solvers, e.g., CVX. Let $\{\tau_{k,n}^{(n)*}, \forall k\}_{m=n}^N$, $\{t^{(n)*}_m\}_{m=n}^N$, and $\eta^{(n)*}$ denote the optimal solution to problem (P11). By using contradiction, we can easily show  that the UAV tends to schedule  SNs for data transmission at the line segments with relatively high instantaneous achievable rate $r_{k,n}$ or the expected rate $\mathbb{E}[r_{k,m}]$, while at the same time, balancing with the SNs'  individual average rates. This observation indicates that the online policy  allows the UAV to adaptively  schedule data transmissions from  the SNs with high real-time achievable rates by exploiting the channel  macro-diversity. Moreover, for the line segment where most of  the SNs are in the NLoS states with relatively low achievable rates, the UAV is expected to reduce the traveling duration over it  to save time for future transmission of SNs with better channel states.

{\color{black}It is worth mentioning that, at each waypoint $n$, the  computational complexity for solving problem (P11) using the interior method  scales as $\mathcal{O}\l(((K+1)(N-n+1))^{3.5}\r)$, which decreases along the UAV path due to the decreasing number of the optimization variables associated with the remaining line segments. In practice, solving an LP takes relatively short running time as will be  demonstrated  in  Section~\ref{Sec:SimuOnline}. Last, combining the analysis  for the convergence and complexity of the offline design in Section~\ref{Sec:Complexity}, the proposed hybrid offline-online optimization method can be shown to always converge to a suboptimal solution of problem (P1) and has an overall complexity order of $\mathcal{O}\l(((K+1)(N-n+1))^{3.5}+(N+KN)^{3.5}\log^2(1/\epsilon)\r)$.}
\begin{remark}[Practical implementation]\emph{Prior to the UAV's flight, the initial joint design of UAV trajectory and communication scheduling is offline computed by the iterative algorithm in Section~\ref{Sec:Complexity}.
The information of the optimized waypoints, $\{({\mathbf{q}}_n^{T*}, z^*_n)\}_{n=1}^{N+1}$, is  stored at the UAV, based on which the UAV can compute the expected rates from individual SNs at each line segment, $\{\mathbb{E}[r_{k,n}], \forall k\}_{n=1}^N$. During the UAV's flight, at each waypoint $n$, the UAV first acquires the instantaneous CSI  with  SNs $\{c_{k,n}, \forall k\}$ and then computes the optimal solution to problem (P11) with the updated information of $\{r^{\rm{ac}}_{k,n}, \forall k\}$ and  $ T^{\rm{re}}_n$. Last, the UAV applies the communication scheduling for the current line segment $n$ according to $\{\tau_{k,n}^{(n)*}, \forall k\}$, and at the same time, heads towards the next waypoint at the horizontal and vertical speeds determined by $t^{(n)*}_n$ and \eqref{Eq:flyspeed}. 
}
\end{remark}

\vspace{-15pt}
\begin{remark}[DoF for online speed optimization]\emph{The total  flight duration $T_0$ affects the DoF for speed optimization in the online phase. Specifically,  given a relatively short flight duration, the UAV designed in the offline phase tends to sequentially visit  each SN as close as possible  at the maximum horizontal speed. As observed from \eqref{Eq:OnlineEachT}, this will limit the DoF for speed optimization in the online phase even with a large maximum vertical speed. In contrast, with a relatively long flight duration, the UAV can more flexibly  accelerate or slow down in its real-time flight as long as it can reach the final location in time.}
\end{remark}

\begin{remark}[Proposed method versus open-loop feedback control]\emph{
The  proposed online adaptation partially resembles the celebrated \emph{open-loop feedback control} (OLFC) \cite{bertsekas1995dynamic} that computes an open-loop policy at the current time as if no future (channel) state information will be available. However, when applying to our considered problem, OLFC requires solving a joint optimization problem for the UAV trajectory and communication scheduling in each time slot using an  algorithm similar to the iterative algorithm in Section~\ref{Sec:Complexity}
 with sequentially reduced dimension,  whose computational complexity may be too high to be implementable at the UAV in real time. In contrast, our proposed hybrid design solves the time-consuming path optimization problem in the offline phase, and only refines a \emph{simplified} open-loop policy in the online phase for adjusting  the UAV flying speeds and communication scheduling  by solving a low-complexity  LP, thus making it amenable to real-time  implementation by the UAV.
}
\end{remark}
\vspace{-7pt}
{\color{black}\begin{remark}[Extension to the energy-constrained SNs]\emph{The results of this paper for the renewable energy-powewed SNs can be extended to the energy-constrained SNs as follows. First, let $P_{k,n}$ denote the transmit power of SN $k$ in time slot $n$, which needs to satisfy the long-term power constraint: $\frac{1}{N}\sum_{n=1}^N P_{k,n}\le P_{k,\max}, \forall k$, where $P_{k,\max}$ is the maximum average transmit power of SN $k$. Then the approximated expected rate, $\bar{r}^{\rm{L}}_{k,n}$ in \eqref{Eq:ApprExpRate}, can be re-expressed as $\bar{r}^{\rm{L}}_{k,n}\triangleq\l(B_3+\frac{B_4}{1+ e^{-(B_1+B_2 \theta_{k,n})}}\r)\times \log_2\l(1+\dfrac{P_{k,n}\gamma_0}{a_{k,n}(|| {\mathbf{q}}_n-{\mathbf{w}}_k||^2+z_n^2)^{\alpha_{\rm{L}}/2}}\r)$. As such, the offline design can be modified by reformulating the communication-scheduling problem (P3) as the joint optimization problem for the communication scheduling and SNs' transmit power allocation, which can be shown to be a convex optimization problem. Similarly, the online design can be modified by reformulating the optimization problem (P11) as the joint optimization problem for the UAV's flying speeds as well as SNs' transmit power allocation, without changing the main solution approach proposed in this paper.}
\end{remark}}
\vspace{-15pt}
\section{Simulation Results}\label{Sec:Simu}

In this section, we provide extensive simulation results to verify the effectiveness of the proposed hybrid offline-online design and show key properties of the optimized 3D offline and online UAV trajectories as well as the adaptive communication scheduling.
Unless otherwise stated, the following results are based on a random  realization of $4$ SNs' locations over a square area of $300\times300$ $\rm{m}^2$ for ease of illustration. The UAV is assumed to fly from an initial location  $(0,150,50)$ m to a final location $(300, 150, 50)$ m.  Assume that all the SNs have the same transmit power of $0.1$ W.  We consider an urban environment  for which the parameters of its corresponding generalized logistic model for the LoS probability  are given by $B_1=-0.4568$, $B_2=0.0470$, $B_3=-0.63$, and $B_4=1.63$. Other parameters are set as $\beta_0 = -60$ dB, $\Gamma = 8.2$ dB, $\sigma^2 = -109$ dBm, $\alpha_{\rm{L}}=2.5$, $\alpha_{\rm{N}}=3.5$, $\mu=-20$ dB, $V_{\rm{xy},\max} = 40$ m/s, $V_{\rm{z},\max} = 20$ m/s, $H_{\min} = 50$ m, $H_{\max} = 300$ m, $\delta=0.2$ s, and $\epsilon=0.001$.

\begin{figure}[t]
\begin{center}
\includegraphics[height=5.2cm]{./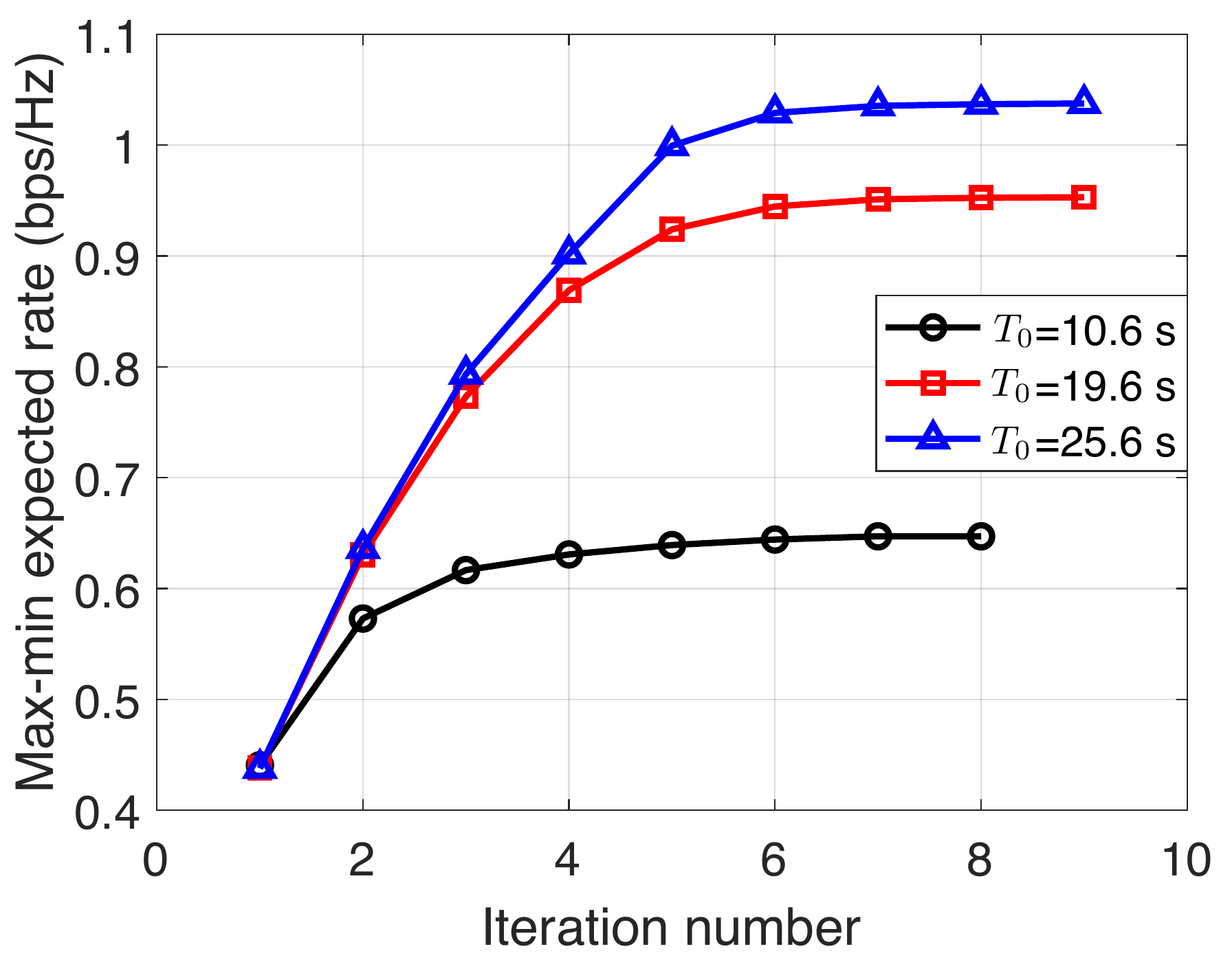}
\caption{Max-min expected rate versus iteration number.}
\label{Fig:Conv}
\end{center}
\end{figure}
\vspace{-5pt}
\subsection{Offline Phase}\label{Sec:SimOff}
We first evaluate the performance of the proposed initial 3D UAV  trajectory design in the offline phase.  To start with, we first show the convergence behavior of the iterative algorithm in Section~\ref{Sec:Complexity}
under different flight durations in Fig.~\ref{Fig:Conv}. It can be observed that the max-min expected rate based on the probabilistic LoS channel model monotonically increases with the number of iterations and quickly converges after  around $10$ iterations for all flight durations.

For  performance comparison, we consider two benchmark schemes: 1) LoS-based (LB) scheme, which jointly designs UAV trajectory and communication scheduling based on the simplified LoS channel model (i.e., assuming the LoS probability given in \eqref{Eq:LoSProb} to be $P^{\rm{L}}_{k,n}=1, \forall k, n$); 2) probabilistic-LoS with the lowest altitude (PLLA) that only optimizes the UAV horizontal trajectory and communication scheduling with the UAV altitude kept equal to $H_{\min}$ for all time. 
Our proposed scheme is named as probabilistic-LoS based (PLB) scheme for convenience. Fig.~\ref{Fig:Traj106} illustrates the UAV trajectories by different schemes with the flight duration $T_0=10.6$ s. Several interesting observations are made as follows. First, as shown in Fig.~\ref{FigHoritraj_sch}, the three schemes have similar horizontal trajectories with the UAV sequentially traveling around each SN, but differ in that, the UAV for both the PLB and PLLA schemes moves closer to SNs $1$ and  $2$ than that of the LB scheme when traveling nearby them, at the cost of being more away from SN $4$. The reason is that the elevation angles of the UAV with SNs $1$ and $2$ are relatively small and thus need to  be enlarged by tuning the UAV horizontal trajectory, which is not seen in the trajectory by the LB scheme. Second, for the 3D UAV trajectory shown in Fig.~\ref{Fig3DTraj_sch}, we can observe that the proposed PLB scheme can exploit the additional  DoF brought by  the UAV vertical trajectory to balance the angle-distance trade-off more efficiently than the PLA scheme as the elevation angle can be effectively enlarged by moderately increasing the altitude  without incurring significant path loss.

\begin{figure}[t!]
\centering
\subfigure[UAV horizontal trajectory.]{\label{FigHoritraj_sch}
\includegraphics[height=5.3cm]{./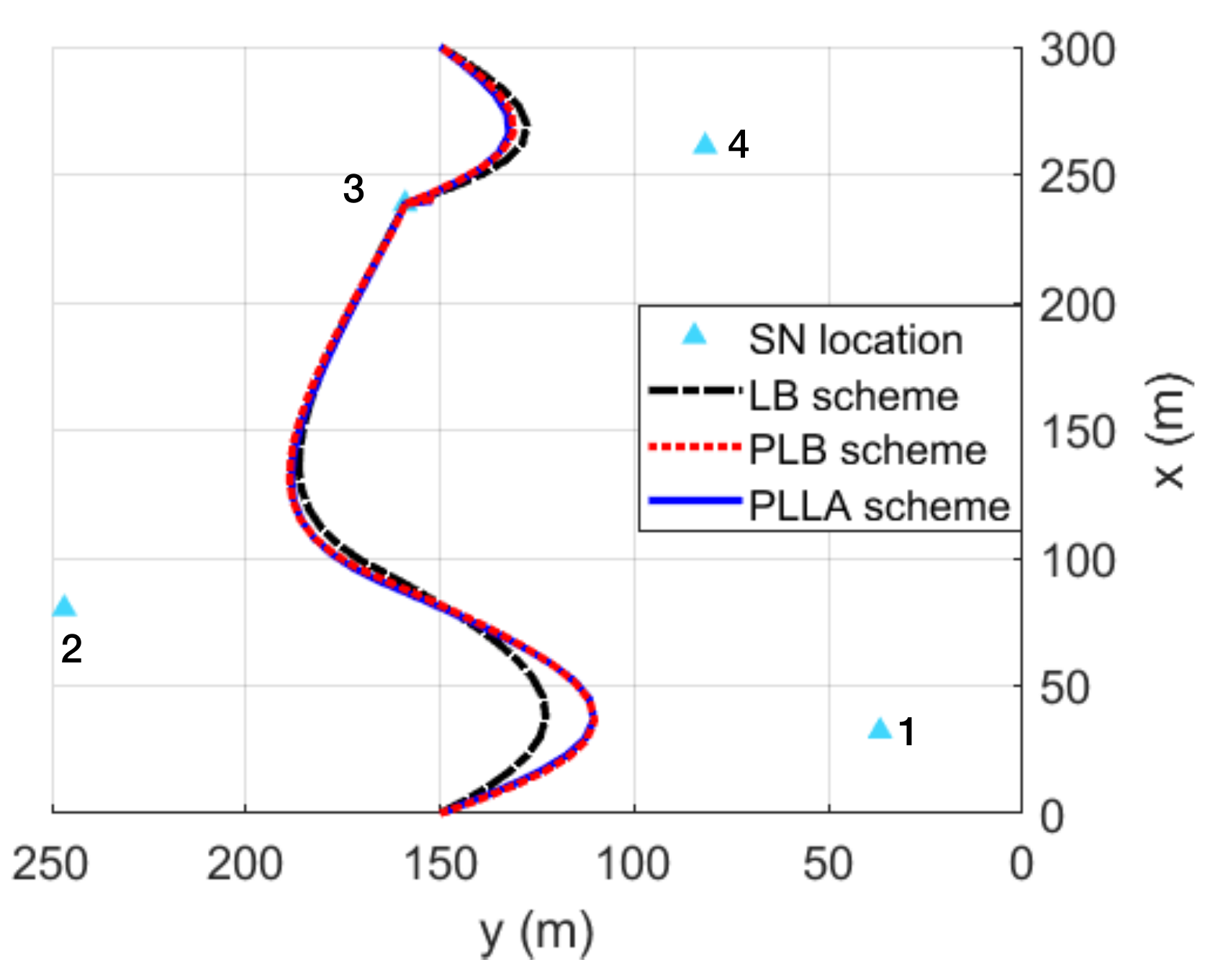}}
\hspace{10mm}
\subfigure[3D UAV  trajectory.]{\label{Fig3DTraj_sch}
\includegraphics[height=5.3cm]{./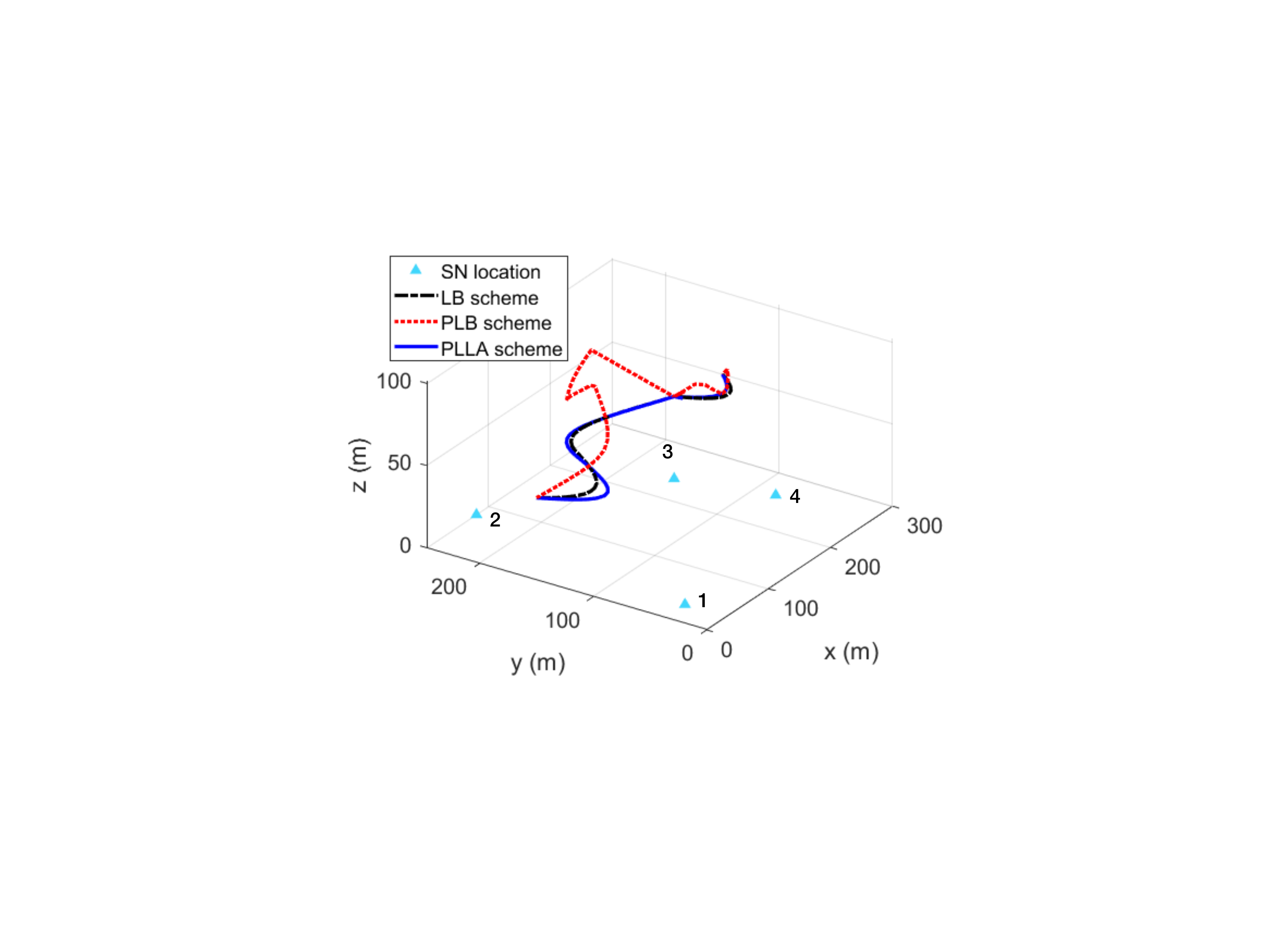}}
\caption{Comparison of the UAV trajectories by  different schemes with $T_0=10.6$ s.}\label{Fig:Traj106}
\end{figure} 

\begin{figure}[t!]
\centering
\subfigure[UAV horizontal trajectory.]{\label{FigHoritraj_time}
\includegraphics[height=5.3cm]{./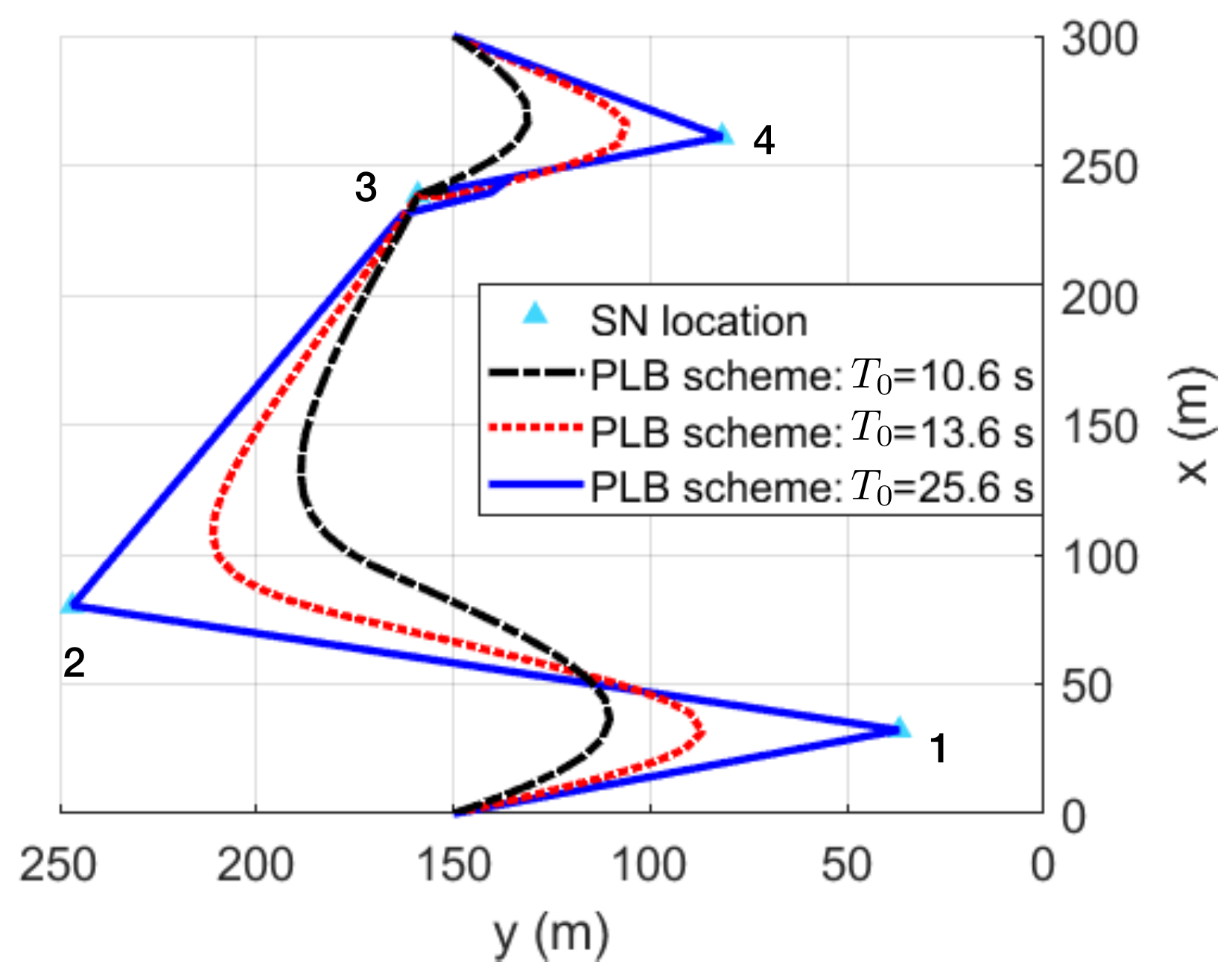}}
\hspace{10mm}
\subfigure[3D UAV trajectory.]{\label{FigVertiTraj_time}
\includegraphics[height=5.3cm]{./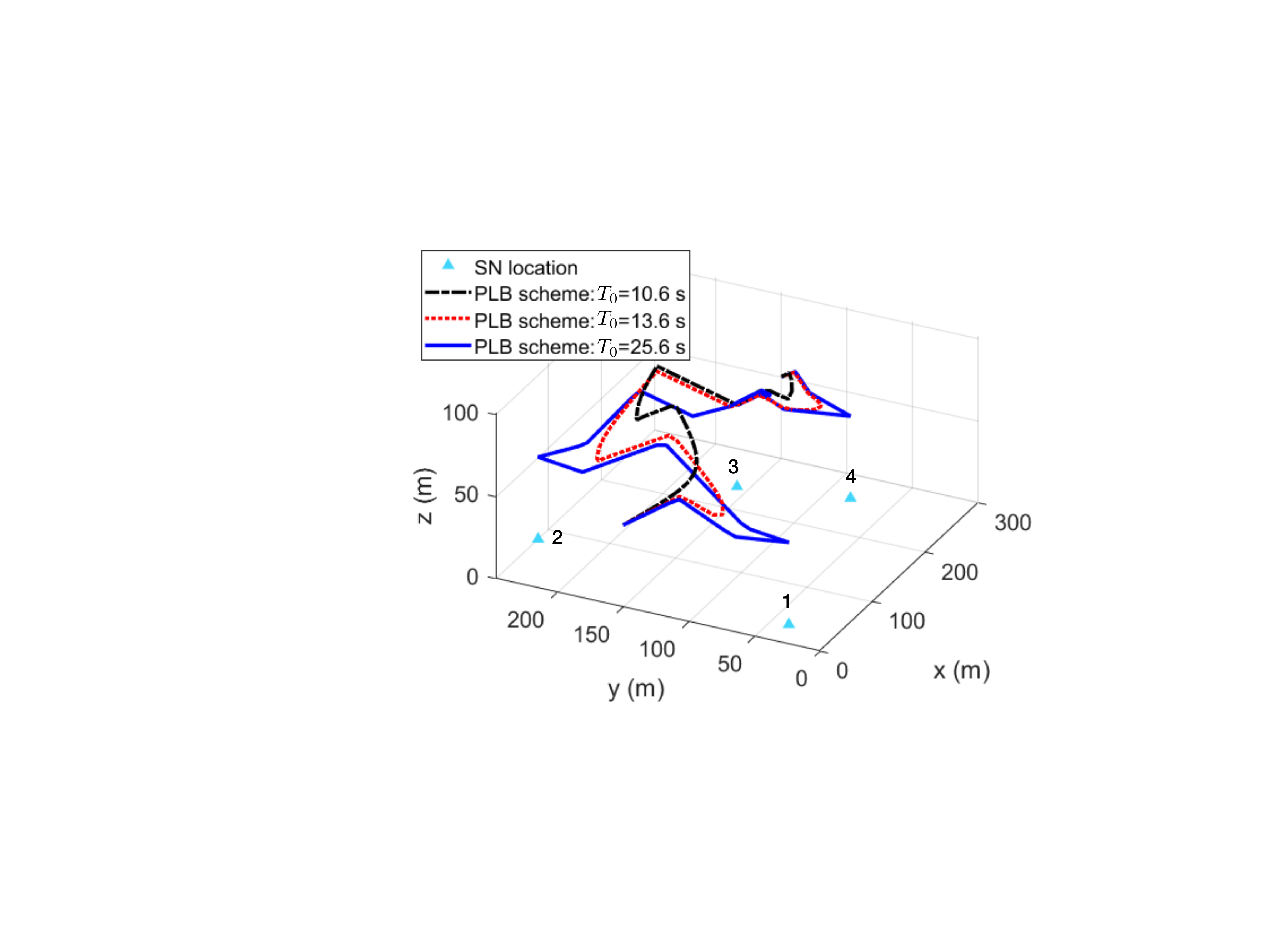}}
\caption{Comparison of the UAV trajectories by the PLB scheme under different UAV flight durations.}\label{Fig:TrajTime}
\end{figure}

In Fig.~\ref{Fig:TrajTime}, we compare the UAV trajectories by the proposed  PLB scheme under different UAV flight durations. It is observed that given a short flight duration (e.g., $T_0= 10.6$ s), the UAV flies at a relatively high altitude to maintain appropriate elevation angles with  SNs for increasing the LoS probabilities. As the flight duration increases, the UAV lowers its altitude and moves closer to the SNs when traveling around them. Specifically, with a sufficiently long flight duration (e.g., $T_0 = 25.6$ s), the UAV can  sequentially hover right above each SN at the minimum altitude for a certain amount of time, which is expected since by this way, the UAV can attain the largest elevation angle with each SN, and at the same time, experience the minimum path loss. 

\begin{figure}[t]
\begin{center}
\includegraphics[height=5.3cm]{./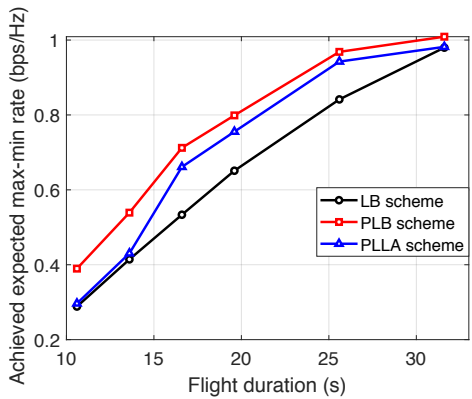}
\caption{Offline phase: achieved expected max-min rate versus flight duration.}
\label{Fig:OffRate}
\end{center}
\end{figure}
Fig.~\ref{Fig:OffRate} plots the curves of the  \emph{achieved expected} max-min rates by different offline schemes averaged over $100$ city realizations versus the UAV flight duration. Note that there is no online adjustment of UAV flying speeds and communication scheduling here. We can observe that the PLLA scheme achieves larger rates than the LB scheme, since it optimizes the UAV horizontal trajectory based on the more accurate   probabilistic LoS channel model instead of the simplified LoS channel model. In addition, the proposed PLB scheme can further improve the rate performance as compared to the PLLA scheme, since it exploits the UAV vertical trajectory to further reduce  the blockage effect. However, the rate performance gain diminishes as the UAV flight duration increases, which is expected since  all the schemes have similar trajectories when the flight duration is sufficiently long, i.e., the UAV will spend more  time on hovering above SNs at its lowest altitude to achieve the highest rates with them, while the time and achieved rates when it is flying  become less significant.

\subsection{Online Phase}\label{Sec:SimuOnline}
Next, we study  the effectiveness of the proposed online design by fixing the UAV path as that  obtained from the offline phase by the PLB scheme. For performance comparison, we consider the following online  schemes: 1) PLB scheme  without  any online adjustment; 2) adaptive communication scheduling (ACS) that only online updates the communication scheduling by solving an LP in each time slot which has a similar form as problem (P11), but without the optimization of UAV flying speeds;  3) our proposed scheme with joint communication scheduling and UAV flying speed adaptation,  thus named as  joint adaptation (JA); 
and 4) optimal joint adaptation (OJA) assuming the ideal case of perfect (non-causal)  UAV-SNs CSI along the offline obtained UAV path, which  jointly optimizes  UAV flying speeds and communication scheduling by solving a similar problem as  problem (P11) with the expected rate replaced by the exact  rate.


\begin{figure}[t!]
\vspace{5pt}
\centering
\subfigure[Online phase: achieved expected max-min rate \newline\indent \quad~ versus flight duration.]{\label{Fig:RateOnline}
\includegraphics[height=5.3cm]{./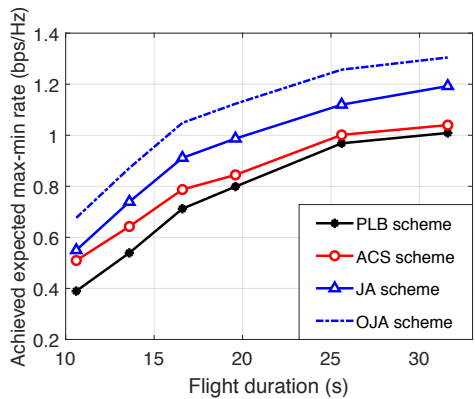}}
\hspace{10mm}
\subfigure[Scheduled transmission rate.]{\label{Fig:SampleTimeRate}
\includegraphics[height=5.3cm]{./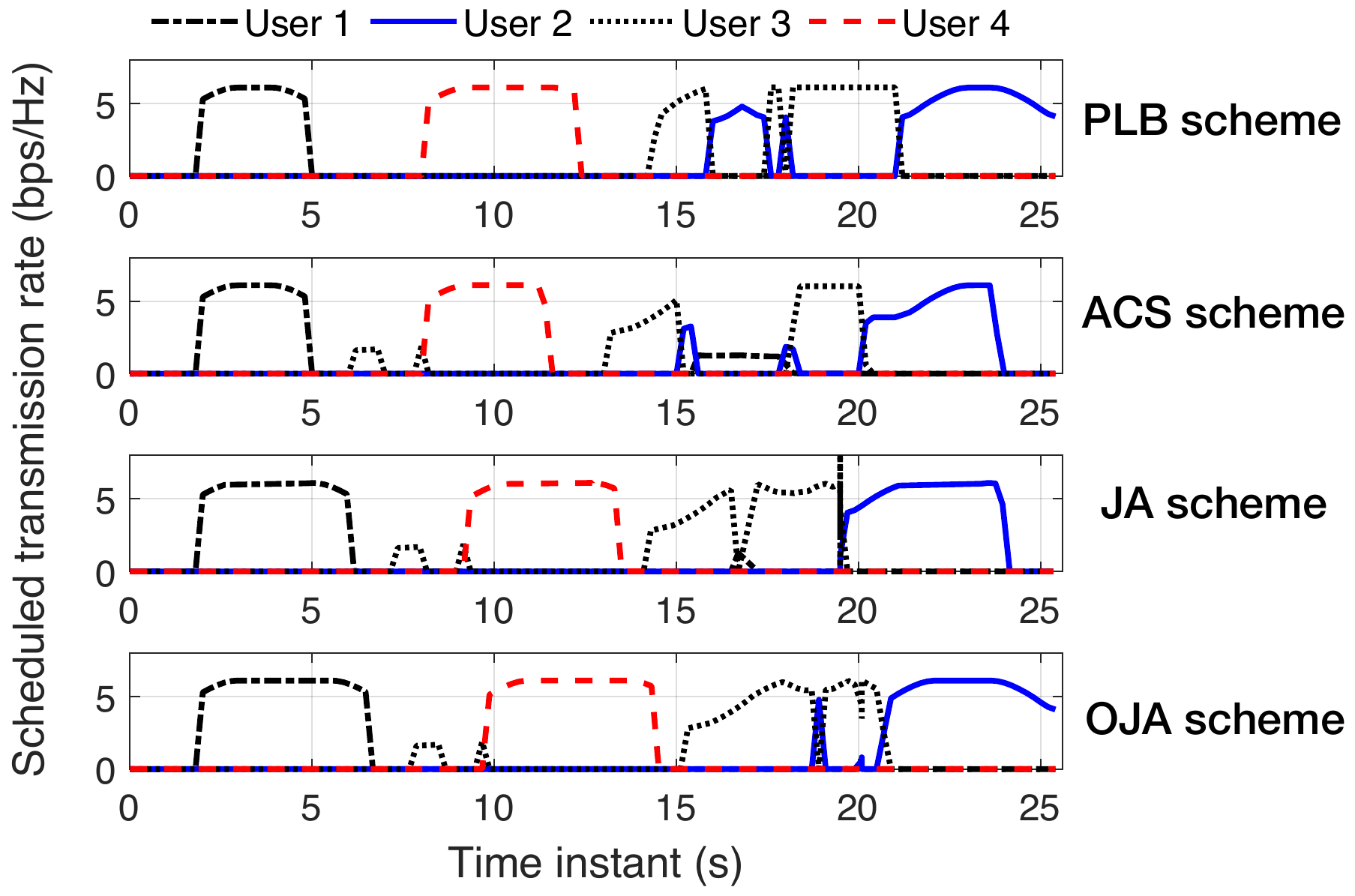}}
\subfigure[Traveling durations of the JA scheme.]{\label{Fig:TravelDuration}
\includegraphics[height=5.3cm]{./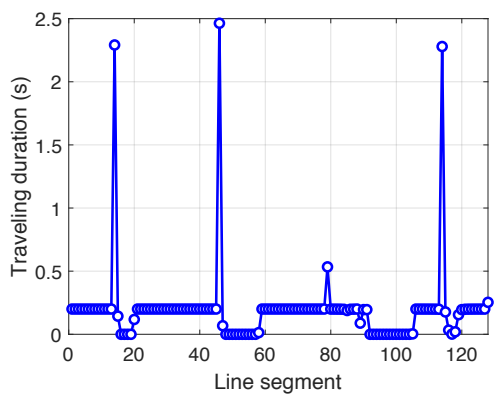}}
\hspace{12mm}
\subfigure[Achieved max-min rate.]{\label{Fig:SamRate}
\includegraphics[height=5.3cm]{./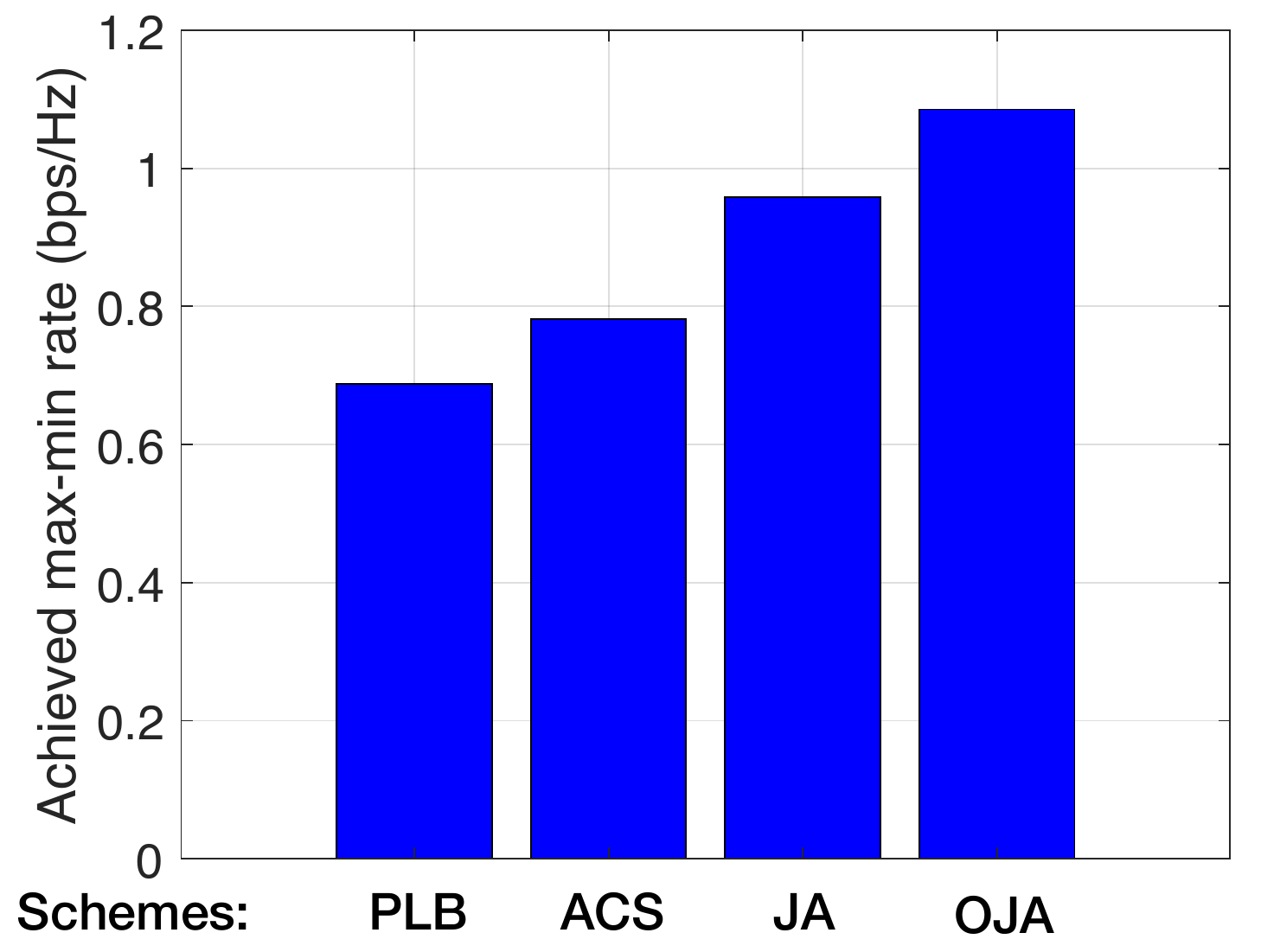}}
\caption{(a): Comparison of the rate performance by different online schemes versus flight duration; (b)-(d): rate performance and traveling durations of different online schemes in a random city realization with $T_0=25.6$ s.}
\label{Fig:OnlineStage}
\end{figure}

In Fig.~\ref{Fig:RateOnline}, we compare the achieved expected max-min rates  by different online schemes over $100$ city realizations versus the flight duration. First, it  is observed that, for all  schemes, the expected max-min rate first increases with the flight duration and then tends to saturate in the regime of long flight duration due to the similar reason given for the offline case (see Fig.~\ref{Fig:OffRate}). Second, although  the simple ACS scheme outperforms the  PLB scheme, its performance gain diminishes with the flight duration. The reason is that, with a sufficiently long flight duration, the dominant rate is contributed by the regime where the UAV flies (nearly) above each SN and thus the UAV has a high likelihood to establish an LoS link with the SN underneath, which renders the online  communication-scheduling adaptation less effective. This limitation, however, can be alleviated by the proposed JA scheme, which  further adjusts  the UAV flying speeds in real time  along the path such that it can spend more time on the line segments that contribute to relatively higher achievable rates, thus sustaining  the rate  gain even for the case of  long flight duration. 
Last, it is worth noting  that  OJA scheme achieves the  highest  max-min rates since it has full and non-causal CSI  along the UAV path, thus providing a performance upper bound for our proposed JA scheme with causal CSI only. 

To further illustrate the proposed JA scheme for the online phase, we compare the rate performance in a random city realization with $T_0=25.6$ s. One can observe from Fig.~\ref{Fig:SampleTimeRate} that the achieved rate from SN $1$ is the performance bottleneck of the PLB scheme. The ACS  scheme can slightly improve the average rate of SN $1$ by scheduling it for transmission  in idle  time slots without significantly compromising the rates of other SNs. In contrast,  our proposed JA scheme can  further improve the max-min rate over the ACS scheme  as shown in Fig~\ref{Fig:SamRate} by online adjusting  the traveling duration at each  line segment based on the instantaneous CSI with SNs (see Fig.~\ref{Fig:TravelDuration}). Again, the OJA scheme can make full use of the non-causal CSI to improve the average rates from all SNs, thus achieving the largest max-min rate.

\begin{figure}[t!]
\vspace{5pt}
\centering
\subfigure[Running time of two online schemes in a random city \newline\indent \quad~realization with $T_0= 19.6$ s. ]{\label{Fig:SamRunTime}
\includegraphics[height=5.3cm]{./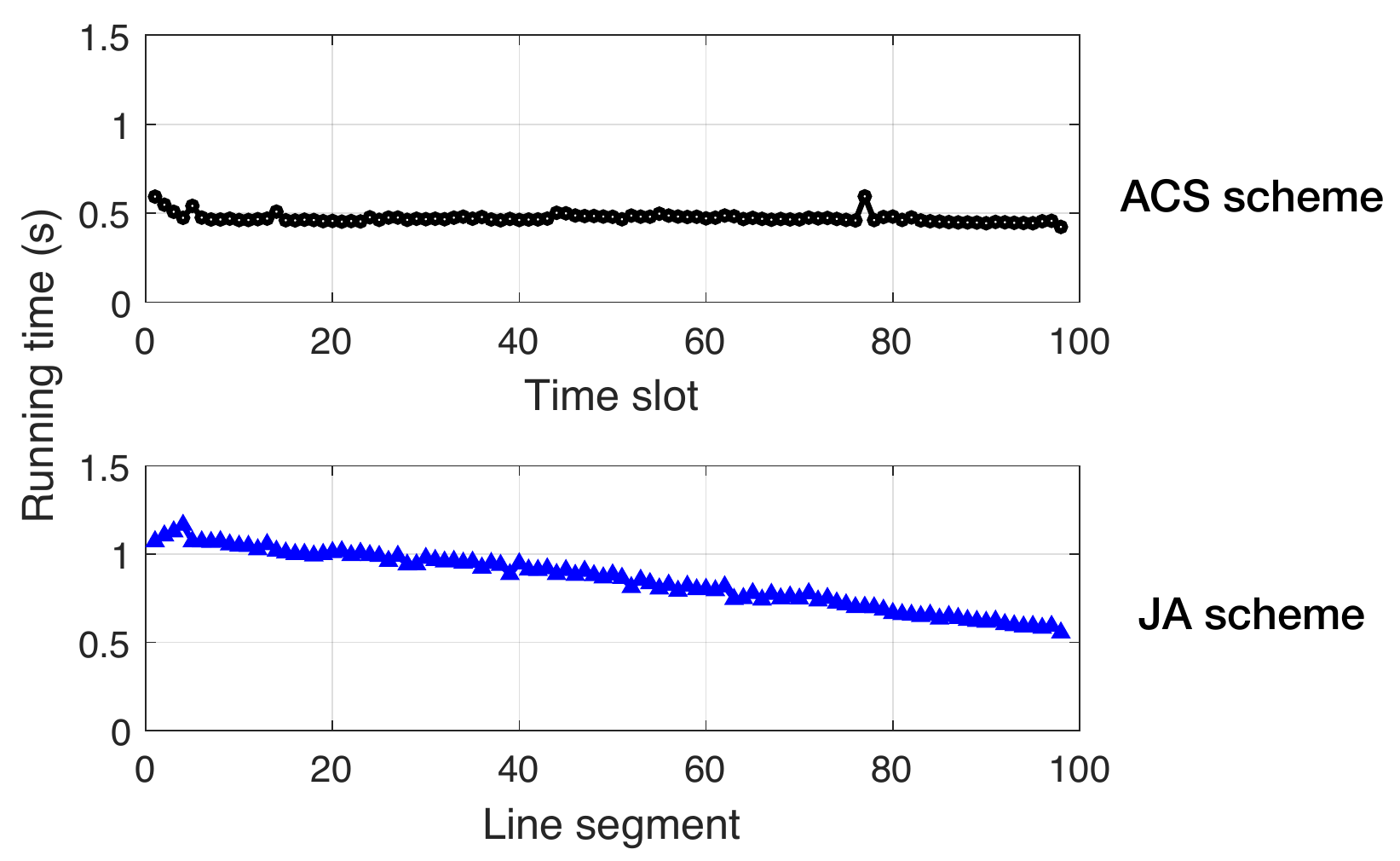}}
\hspace{6mm}
\subfigure[{\color{black}Achieved max-min rate versus number of SNs.}]{\label{FigSN_rate_major}
\includegraphics[height=5.3cm]{./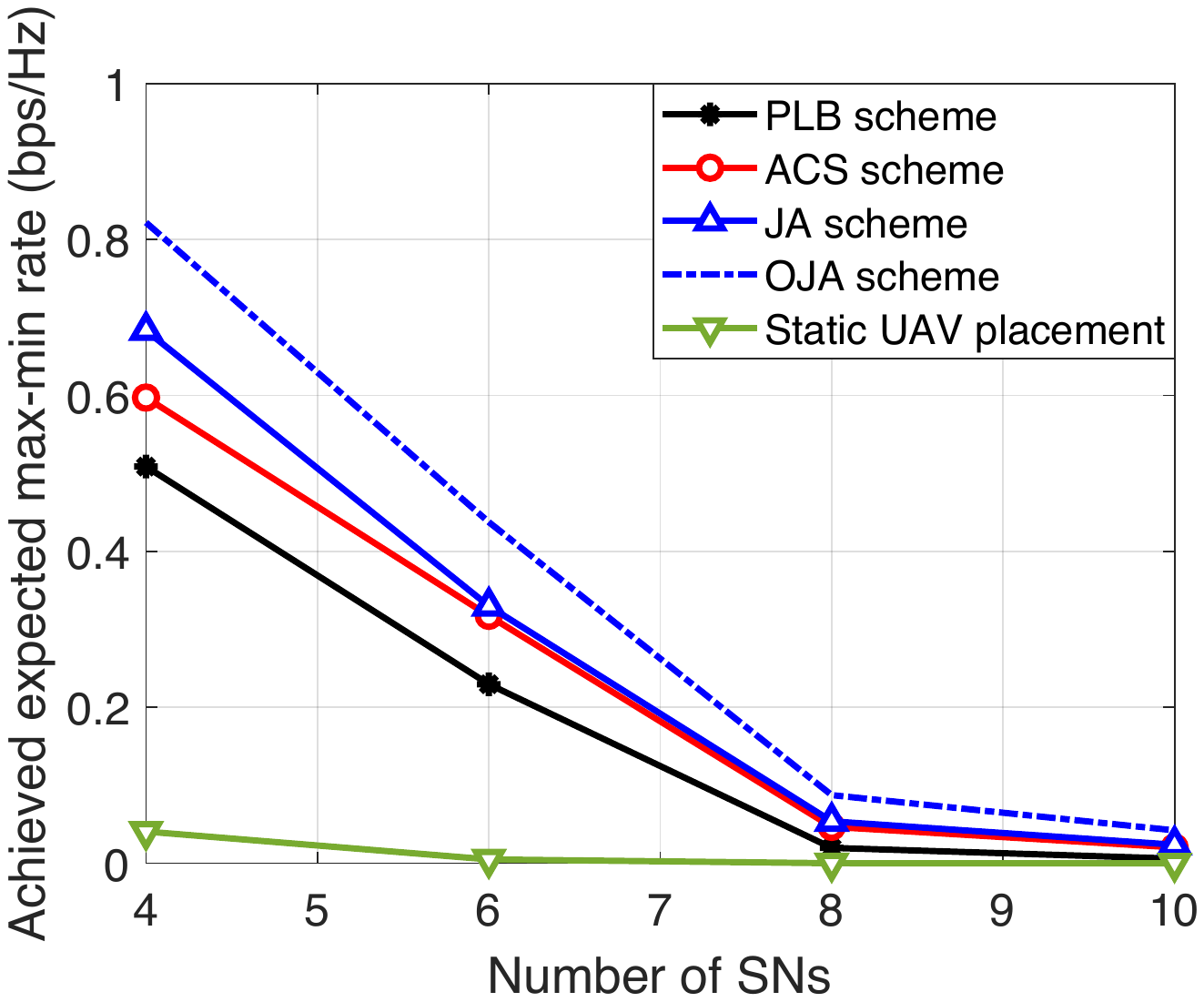}}
\caption{(a): Running time of two online schemes; (b): effects of the number of SNs.}
\label{Fig:OnlineStage}
\end{figure}


To demonstrate the required computational  complexity for implementing  the proposed online designs, we compare the running time of the ACS and JA schemes using Matlab on a computer equipped with Intel Core i5-7500, 3.40 GHz processor, and 8 GB RAM memory.   Fig.~\ref{Fig:SamRunTime} shows the running time of the two schemes for a random city realization with $T_0= 19.6$ s. It is observed that the proposed JA scheme takes slightly longer running time than the ACS scheme due to the joint online optimization of UAV flying speeds and communication scheduling for achieving larger max-min rates (see Fig.~\ref{Fig:RateOnline}). 
Moreover,  the running time of the proposed JA scheme decreases along the UAV path or the increasing line-segment index, which is expected since the required number of optimization variables reduces as the UAV flies to the destination. Overall, the required computation time over each line segment is in the order of a second for both online schemes, which is practically affordable.   

{\color{black}Last, we compare in Fig.~\ref{FigSN_rate_major} the achieved expected max-min rates  by different online schemes versus the number of SNs. To show the gain of trajectory optimization, we consider another benchmark scheme called static UAV placement, for which the UAV's horizontal placement is determined by the K-means clustering algorithm based on SNs' locations, the altitude is optimized by using a similar algorithm as for optimizing the vertical trajectory in this paper, and the communication scheduling is optimized according to the actual UAV-SNs channel states. It is observed that the expected max-min rates of all schemes decrease with the number of SNs, since more SNs share the communication resource and the max-min rate is restricted by the worst-case SN that achieves the lowest rate. Moreover, all the schemes that deploy a flying UAV for data collection significantly outperforms the benchmark with a static UAV. The reason is that the achieved max-min  rate by deploying a static UAV is severely  affected by the UAV-SNs channel states at the optimized UAV's location and can be extremely low if one of the SNs is in the NLoS state, whereas deploying a flying UAV can resolve this issue by exploiting the time-varying channel states with SNs.}

\section{Conclusions}\label{Sec:Conc}

This paper studies the  UAV-enabled WSN by deploying  a UAV to collect  data from distributed SNs in a Manhattan-type city, for which a new probabilistic LoS channel model is constructed and shown in the form of a \emph{generalized logistic} function of the UAV-SN elevation angle. We consider  that the UAV only has the knowledge of SNs' locations and the probabilistic LoS channel model of the environment prior to its flight, while it  can obtain obtain the instantaneous UAV-SNs CSI along its flight.  Our objective is to maximize the minimum (average) data collection rate from all the SNs for the UAV. To this end,  we first formulate a rate maximization problem by jointly optimizing the 3D UAV  trajectory and communication scheduling. Then we propose a novel design method, called \emph{hybrid offline-online optimization}, to obtain a suboptimal solution to it, by leveraging both the statistical and real-time CSI. Our  proposed method decouples the  joint design of UAV trajectory and communication scheduling into two phases: namely, an \emph{offline} phase that determines the UAV path based on the probabilistic LoS channel model, followed by an \emph{online} phase that jointly adjusts the UAV flying speeds and communication scheduling along the offline optimized UAV path based on the instantaneous UAV-SNs CSI and SNs' individual amounts of data  received accumulatively.  Simulation results demonstrate the effectiveness of the proposed hybrid design and reveal key insights into the joint offline-online  3D UAV trajectory and communication scheduling optimization.  

The  proposed hybrid design for UAV-enabled WSNs opens many interesting and important directions that deserve further in-depth investigation, some of which as discussed as follows. First, besides the adopted probabilistic LoS channel model, there exist many other stochastic UAV-ground channel models in the literature, such as the nested segmented ray-tracing channel model and the 3D geometry-based stochastic channel model \cite{zeng2019accessing}. Therefore, it is worth studying whether/how we can apply/extend the proposed hybrid design method to optimize the UAV trajectory and communication scheduling under these channel models. Next, considering the ideal case where the complete UAV-SNs CSI is known \emph{a priori},  the corresponding optimal solution to our studied problem will provide a performance upper bound for our proposed hybrid design with statistical and partial CSI only.  However, how to obtain the optimal solution efficiently is still an open problem and thus needs  further investigation.
 In addition, it is interesting to extend the current work assuming one single data collection operation to the scenario with multiple operations. This setting introduces new design challenges, for example, how to make use of the limited CSI obtained from the previous operations to design the UAV trajectory in the subsequent operations for improving their rate performance. Last, in practice, to further enhance the system rate performance, multiple UAVs can be deployed to  collect data from SNs. In this scenario, it is critical to design the cooperation schemes among the UAVs in both the offline and online phases by leveraging the shared CSI for achieving different objectives such as the maximum min-rate and the maximum energy efficiency.
%
%
%

\appendix
\subsection{Simulation-Based Modeling  for LoS Probability}\label{App:SimuLoS}
Consider a Manhattan-type city where the whole area is partitioned into uniform square grids  in the average  size of  buildings therein which can be obtained based on  typical city parameters \cite{data2003prediction}.  We first randomly and uniformly generate the buildings in the square grids with random heights following the Rayleigh distribution (see Fig.~\ref{FigCity}). Next, a set of SNs are randomly generated on the area unoccupied by the buildings. By using a similar simulation method as in \cite{holis2008elevation}, we then obtain the LoS/NLoS channel states under different UAV-SN elevation angles. Specifically, to account for the effect of UAV horizontal location on the LoS probability, we fix the UAV altitude and change the UAV  horizontal location according to the elevation angle (from $5^{\circ}$ to $90^{\circ}$ with a step size of $5^{\circ}$), whereas  the azimuth angle ranges from $0^{\circ}$ to $360^{\circ}$ with a step size of $30^{\circ}$ for each elevation angle. We repeat this procedure for different UAV altitudes ranging from $30$ m to $300$ m with a step size of $30$ m for characterizing the effect of UAV altitude on the LoS probability. Last, the LoS probability is calculated based on the obtained LoS/NLoS states over $200$ randomly-generated cities for each type of environment (see Fig.~\ref{Fig:LoSProb}).

\vspace{-8pt}
\subsection{Proof of Lemma~\ref{Lem:ConvexRateFunc}}\label{App:ConvexRateFunc}
Define $\xi(x,y)\overset{\triangle}{=}\(B_3+\frac{B_4}{x}\r)\ln\l(1+\frac{\gamma}{y^{\alpha/2}}\r)$. We first prove the convexity of $\xi(x,y)$ by the definition of convex functions. To this end, we first derive that first-order derivatives of $\xi(x,y)$ w.r.t. $x$ and $y$, as follows.
\begin{align}
\xi_x(x,y)=-\frac{B_4}{x^2}\ln\l(1+\frac{\gamma}{y^{\alpha/2}}\r),\quad
\xi_y(x,y)=-\l(B_3+\frac{B_4}{x}\r)\frac{\gamma \alpha/2}{y(y^{\alpha/2}+\gamma)}.
\end{align} 
Then, the Hessian of $\xi(x,y)$ is
\begin{equation}
\bigtriangledown^2 \xi(x,y)=\l[
\begin{array}{cc}
\dfrac{2B_4}{x^3}\ln\l(1+\dfrac{\gamma}{y^{\alpha/2}}\r)& \dfrac{B_4 \gamma \alpha/2}{x^2y(y^{\alpha/2}+\gamma)}\\
 \dfrac{B_4 \gamma \alpha/2}{x^2y(y^{\alpha/2}+\gamma)} & \l(B_3+\dfrac{B_4}{x}\r) \dfrac{\gamma\alpha/2[(\alpha/2+1)y^{\alpha/2}+\gamma] }{y^2(y^{\alpha/2}+\gamma)^2}
 \end{array}\r].
\end{equation}
Given $\alpha\ge2$, for any $\mathbf{t}=[t_1, t_2]^T$, we have 
\begin{align*}
&\mathbf{t}^T \bigtriangledown^2 \xi(x,y) \mathbf{t}\nn\\
=&t_1^2\l(\frac{2B_4}{x^3}\ln\l(1+\frac{\gamma}{y^{\alpha/2}}\r)\r)+t_2^2\l(\l(B_3+\frac{B_4}{x}\r) \frac{\gamma \alpha/2[(\alpha/2+1)y^{\alpha/2}+\gamma]}{y^2(y^{\alpha/2}+\gamma)^2}\r)+2t_1t_2\l(\frac{B_4 \gamma \alpha/2}{x^2y(y^{\alpha/2}+\gamma)}\r)\nn \\
\overset{(i)}{\ge}&t_1^2\l(\frac{2B_4}{x^3}\frac{\gamma}{y^{\alpha/2}+\gamma}\r)+t_2^2\l(\l(B_3+\frac{B_4}{x}\r) \frac{\gamma(2y^{\alpha/2}+\gamma) }{y^2(y^{\alpha/2}+\gamma)^2}\r)+2t_1t_2\l(\frac{B_4 \gamma}{x^2y(y^{\alpha/2}+\gamma)}\r)\nn \\
=&\frac{2t_1^2 B_4 \gamma y^2(y^{\alpha/2}+\gamma )+t_2^2\gamma  x^3 (B_3+\frac{B_4}{x}) (2y^{\alpha/2}+\gamma )+2t_1t_2B_4\gamma xy(y^{\alpha/2}+\gamma)}{x^3 y^2(y^{\alpha/2}+\gamma )^2}\nn\\
=&\frac{\gamma[(y^{\alpha/2}+\gamma)B_4(t_1y+t_2x)^2+(y^{\alpha/2}+\gamma)(B_4 t_1^2y+t_2^2x^3B_3)+t_2^2y^{\alpha/2}x^3(B_3+\frac{B_4}{x})] }{x^3 y^2(y^{\alpha/2}+\gamma )^2}\nn\\
\ge&\frac{\gamma[(y^{\alpha/2}+\gamma)B_4(t_1y+t_2x)^2+(y^{\alpha/2}+\gamma)\min(t_1^2y^2x,t_2^2x^3)(B_3+\frac{B_4}{x})+t_2^2y^{\alpha/2}x^3(B_3+\frac{B_4}{x})] }{x^3 y^2(y^{\alpha/2}+\gamma )^2}>0,
\label{Eq:Hassian}
\end{align*} 
for $x> 0$ and $y> 0$, where $(i)$ is due to $\alpha\ge2$ and $\ln(1+\frac{1}{a})\ge\frac{1}{a+1}$ for $a>0$. Therefore, $\xi(x,y)$ is a convex function, leading to the convexity of $\psi(x,y)$.
\subsection{Proof of Lemma~\ref{Lem:RateBound}}\label{App:RateBound}
\vspace{-5pt}
Using Lemma~\ref{Lem:ConvexRateFunc}, it can be proved that $\tilde{\psi}(x,y)=(B_3+\frac{B_4}{X+x})\log_2\l(1+\frac{\gamma}{Y+y}\r)$ is a convex function w.r.t. $x> -X$ and $y> -Y$. Then using the SCA technique, for any given $x_0$ and $y_0$, we have $\tilde{\psi}(x,y)\ge  \tilde{\psi}(x_0,y_0)+\tilde{\psi}_x(x_0,y_0)(x-x_0)+\tilde{\psi}_y(x_0,y_0)(y-y_0), \forall x, y$, where
\vspace{-3pt}
\begin{align*}
\tilde{\psi}_x(x_0,y_0)&=\frac{-B_4(\log_2{e})}{(X+x_0)^2}\ln\l(1+\frac{\gamma}{(Y+y_0)^{\alpha/2}}\r), \\
\tilde{\psi}_y(x_0,y_0)&=-\l(B_3+\frac{B_4}{X+x_0}\r)\frac{\gamma \alpha/2 (\log_2{e})}{(Y+y_0)((Y+y_0)^{\alpha/2}+\gamma)}.
\end{align*}
By letting $x_0=0$ and $y_0=0$, we can obtain
\vspace{-4pt}
\begin{align*}
&\l(B_3+\frac{B_4}{X+x}\r)\log_2\l(1+\frac{\gamma}{Y+y}\r)\nn\\
&\ge\l(B_3+\frac{B_4}{X}\r)\log_2\l(1+\frac{\gamma}{Y}\r)-
\frac{B_4(\log_2{e})}{X^2}\ln\l(1+\frac{\gamma}{Y^{\alpha/2}}\r) x-
\l(B_3+\frac{B_4}{X}\r)\frac{\gamma \alpha/2 (\log_2{e})}{Y(Y^{\alpha/2}+\gamma)}y.
\end{align*}
Last, by letting $\gamma=\gamma_k$,    $X=1+ e^{-\hat{\phi}_{k,n}}$, $x= e^{-{\phi}_{k,n}}- e^{-\hat{\phi}_{k,n}}$, $\hat{\phi}_{k,n}=B_1+B_2\hat{\theta}_{k,n}$, $Y=|| \hat{\mathbf{q}}_n-\mathbf{w}_k||^2+z_n^2$, and $y= || \mathbf{q}_n-\mathbf{w}_k||^2-|| \hat{\mathbf{q}}_n-\mathbf{w}_k||^2$, we thus derive Lemma~\ref{Lem:RateBound} where $\hat{\bar{r}}^{\rm{L}}_{k,n}=(B_3+\frac{B_4}{X+x})\log_2\l(1+\frac{\gamma}{Y+y}\r)$, $\hat{\Omega}_{k,n}=\frac{B_4(\log_2{e})}{X^2}\ln\l(1+\frac{\gamma}{Y^{\alpha/2}}\r)$, and $\hat{\Psi}_{k,n}=\l(B_3+\frac{B_4}{X}\r)\frac{\gamma \alpha/2 (\log_2{e})}{Y(Y^{\alpha/2}+\gamma)}$.
\vspace{-10pt}

\end{document}